\newtheorem{lemma}{Lemma}
\newtheorem{proposition}{Proposition}
\newtheorem{definition}{Definition}
\begin{document}

\title{Incentive Mechanism Design for Wireless Energy Harvesting-Based Internet of Things}
%
%

\author{Zhanwei Hou, He Chen, Yonghui Li,~and Branka Vucetic

\thanks{Z.W. Hou, H. Chen, Y.H. Li and B. Vucetic are with the Centre of Excellence in Telecommunications, School of Electrical and Information Engineering, University of Sydney, Sydney, NSW, Australia (e-mail: \{zhanwei.hou, he.chen, yonghui.li, branka.vucetic\}@sydney.edu.au).}
}

\maketitle

\begin{abstract}
Radio frequency energy harvesting (RFEH) is a promising technology to charge unattended Internet of Things (IoT) low-power devices remotely. To enable this, in future IoT system, besides the traditional data access points (DAPs) for collecting data, energy access points (EAPs) should be deployed to charge IoT devices to maintain their sustainable operations. Practically, the DAPs and EAPs may be operated by different operators, and the DAPs thus need to provide effective incentives to motivate the surrounding EAPs to charge their associated IoT devices. Different from existing incentive schemes, we consider a practical scenario with asymmetric information, where the DAP is not aware of the channel conditions and energy costs of the EAPs. We first extend the existing Stackelberg game-based approach with complete information to the asymmetric information scenario, where the expected utility of the DAP is defined and maximized. To deal with asymmetric information more efficiently, we then develop a contract theory-based framework, where the optimal contract is derived to maximize the DAP's expected utility as well as the social welfare. Simulations show that information asymmetry leads to severe performance degradation for the Stackelberg game-based framework, while the proposed contract theory-based approach using asymmetric information outperforms the Stackelberg game-based method with complete information. This reveals that the performance of the considered system depends largely on the market structure (i.e., whether the EAPs are allowed to optimize their received power at the IoT devices with full freedom or not) than on the information availability (i.e., the complete or asymmetric information).
\end{abstract}

\begin{IEEEkeywords}
Internet of Things, Wireless energy harvesting, Stackelberg game, contract theory, incentive mechanism
\end{IEEEkeywords}

\section{Introduction}
\subsection{Background and Motivations}
By connecting objects, physical devices, vehicles, animals and other items to the Internet, Internet of Things (IoT) has emerged as a new paradigm to enable ubiquitous and pervasive communications \cite{Gubbi2013IoT,Lin2017Survey,Hou2017Recycling}. Wireless sensing service is one of the fundamental applications of IoT, which enables systems and users to continuously monitor ambient environment \cite{Peng2017Enhancing}.

One of the major hurdles for implementing the wireless sensing application is the limited lifetime of traditional battery-powered sensors, which are costly and hard to maintain \cite{kawabata2017robust,yang2017energy}. For example, frequent recharging or battery replacement is inconvenient in deserts or remote areas, and is even impossible for some scenarios, such as toxic environment or implanted medical applications \cite{Kama2015Wireless}. To tackle this problem, radio frequency energy harvesting (RFEH) has recently been proposed as an attractive technology to prolong the operational lifetime of sensors, enhance the deployment flexibility, and reduce the maintenance costs \cite{Kama2015Wireless,Niyato2016Novel}.

In this paper, we consider a radio frequency energy harvesting based IoT system consisting of a data access point (DAP) and several energy access points (EAPs). The DAP collects information from its associated sensors. EAPs can provide wireless charging services to sensors via the RF energy transfer technique. The sensors are assumed to have no embedded energy supply, but they can harvest energy from radio frequency (RF) signals radiated by the surrounding EAPs to transmit the data to the DAP \cite{chen2016cooperative}.

There are some research considering the deployment of dedicated EAPs in the existing cellular network, such that the upgraded network can provide both wireless access and wireless charging services \cite{huang2014enabling,huang2015cutting,zhong2015wireless,chen2017interference,chen2017stochastic,park2017energy,khan2017wirelessly,jiang2016secrecy}. However, it was assumed that the EAPs are deployed by the same operator of the existing network. In practice, the DAP and EAPs may be operated by different operators\footnote{This could happen when a resource-limited operator cannot provide radio frequency energy charging service in some certain area due to limited budgets, or lack of site locations, or lack of licensed spectrum for energy harvesting. Therefore, it has to resort to third-party operators.}. To effectively motivate these third-party and self-interested EAPs to help charge the sensors, effective incentive mechanisms are required to improve the payoff of the DAP as well as those of EAPs. While there are several initial work designing the incentive mechanism \cite{Chen2015stackelberg,sarma2016robust,ma2015distributed} for the EAPs belonging to different operators, complete information was considered in these schemes. Specifically, it was assumed that the EAPs will truthfully report their private information to the DAP, e.g., their energy costs and channel gains between EAPs and sensors. This happens when there exists a \emph{supervising entity} in the network, which is capable of monitoring and sharing all behaviours and network conditions of the DAP and EAPs to ensure that they always report the trustful information. However, without such a supervising entity, EAPs' private information might be not aware to the DAP, which is normally called \emph{information asymmetry} in the literature \cite{bolton2005contract}. A rational EAP may provide misleading information maliciously and pretend to be an EAP with better channel condition and/or higher energy cost to cheat for more rewards. A malicious EAP can succeed in cheating to get more benefits because of \emph{information asymmetry} in the RF energy trading process.

To address above issues, in this paper we will design effective incentive mechanisms to maximize the utilities of the DAP and EAPs under scenarios with asymmetric information. To this end, the following important questions should be addressed under asymmetric information:

\emph{Which EAPs the DAP should hire, how much energy should be requested from the hired EAPs, and how many rewards should be given to the hired EAPs?}

The above questions are non-trivial to answer because the hierarchical interactions between multiple parities should be modeled and analyzed: the cooperations between the DAP, the DAP's sensors and the EAPs, and the competition among EAPs with heterogeneous private information. Moreover, the information asymmetry make the problem even more challenging, because it is difficult for the DAP to hire the effective EAPs without knowing EAPs' private information, such as energy costs and channel condition towards its sensors.

\subsection{Solution and Contribution}
To answer the above questions, we apply the well-established economic theories to model the conflicted interests among the multiple parities in the considered RFEH-based IoT system. Specifically, we first extend the existing Stackelberg game-based approach with complete information to the considered case with asymmetric information, such that we can evaluate the performance degradation caused by information asymmetry to this approach. More specifically, due to lack of the complete information, the expected utility function of the DAP is defined and optimized in the Stackelberg game with asymmetric information. Considering that contract theory is a powerful tool originated from economics to deal with information asymmetry in a monopoly market, we apply contract theory to develop an optimal contract to effectively motivate the EAPs under asymmetric information. In our contract, the RF energy trading market is analogous as a monopoly labor market in economics. The DAP is modeled as the employer who offers a contract to each EAP. The contract is composed of a serious of contract items, which are combinations of energy-reward pairs. Each contract item is an agreement about how many rewards an EAP will get by contributing a certain amount of RF energy. Various heterogeneous EAPs are classified into different types according to their energy costs and instantaneous channel conditions. The EAPs are regarded as labors in the market, which will choose a contract item best meeting their interests. By properly designing the contract, an EAP's type will be revealed through its selection. Thus the DAP can capture each EAP's private information to a certain extent and thus relieve the issue of information asymmetry.

To the best knowledge, this is the first paper that systematically studies the RFEH-based IoT system under asymmetric information. The main contributions of this paper are summarized as follows.
\begin{itemize}
\item We develop the framework of RF energy trading in the RFEH-based IoT system and systematically design the incentive mechanisms for a practical scenario with asymmetric information.
\item To explore the performance degradtion due to lack of full information, we first extend the existing Stackelberg game-based approach to the considered case without instantaneous channel condition and energy costs of the EAPs by optimizing the expected utility function of the DAP. As contract theory is a powerful economic theory for designing incentive mechanism with asymmetric information. We then reformulate the problem by using contract theory. In our contract design, we characterize the necessary and sufficient conditions for the contract feasibility, i.e, individual rationality (IR) conditions and incentive capability (IC) conditions \cite{bolton2005contract}. Subject to the IR and IC constraints, the optimal contract under information asymmetry is achieved by maximizing the DAP's expected utility as well as the social welfare.
\item To compare the performance of the proposed contract theory-based approach using asymmetric information with that of the existing Stackelberg game-based method with complete information, we generalize the existing Stackelberg game formation with unified pricing to the case with discriminative pricing and derive the new Stackelberg equilibrium in closed-form. Here discriminative pricing means that we set different energy prices for different EAPs to fully exploit their potentials for the enery charging service. Numerical simulation results show that information asymmetry can lead to severe performance degradation for the Stackelberg game-based framework, while the proposed contract theory-based scheme using asymmetric information outperforms the Stackleberg game-based approach with complete information. This implies that \emph{the performance of the considered system depends largely on the market structure (i.e., whether the EAPs are allowed to optimize their received power at the IoT devices with full freedom or not) than on the information availability (i.e., the complete or asymmetric information)}.
\end{itemize}

Note that part of the work was presented in our previous conference paper \cite{hou2017contract}. In this journal version, we extend our previous work by considering both scenarios of complete information and asymmetric information and explore the impacts of information availability and market structure.

The rest of this paper is organized as follows: In  Section II, we review the related literature. The system model is presented in Section III. The incentive mechanisms in asymmetric information in complete information are proposed in Section IV. The benchmark schemes are elaborated in Section V. Numerical results are presented in Section VI, and conclusions are drawn in Section VII.

\section{Related Work}
\subsection{EAP assisted Wireless Energy Harvesting}
The idea of deploying a dedicated wireless energy network, that can provide wireless charging service to the terminals by using RFEH technology, was originally proposed by Huang \emph{et al.} \cite{huang2014enabling,huang2015cutting}. The dedicated power transmitters are called power beacons or EAPs. Using stochastic geometry, the tradeoff between the densities of the base stations and EAPs was analyzed in \cite{huang2014enabling}. There are many works exploiting EAPs to enable services of both wireless information and energy access in existing cellular networks\cite{zhong2015wireless,chen2017interference,chen2017stochastic,park2017energy,khan2017wirelessly,jiang2016secrecy}. Stochastic geometry was used to analyze the network performance with the EAPs in \cite{zhong2015wireless,chen2017interference,chen2017stochastic}. In \cite{park2017energy}, beamforming was introduced in the EAPs assisted cellular network to reduce the interference resulting from the EAPs. Leveraging finite-length information theory, the system performance in the finite blocklength regime was analyzed in \cite{khan2017wirelessly}. The security issue with EAP in the presence of a passive eavesdropper was investigated in \cite{jiang2016secrecy}. In all above works, the DAP and EAPs are assumed to belong to the same operator. In such a network, the devices belonging to the same operator with extra energy were assumed to voluntarily assist other devices. However, in practice, the DAP and EAPs may be operated by different operators. To successfully motivate self-interested EAPs to provide help, effective incentive mechanisms are required. There are several prior research works in designing the incentive mechanism\cite{Chen2015stackelberg,sarma2016robust,ma2015distributed} for the EAPs, where \cite{Chen2015stackelberg,sarma2016robust} adopted Stackelberg game and \cite{ma2015distributed} used auction to design the incentive mechanism. However, the existing incentive mechanisms only considered complete information scenario.

\subsection{Contract Theory}
The contract theory has been employed to address incentive design problems in wireless communication areas, such as mobile edge computing\cite{liu2017design}, device-to-device (D2D) communications\cite{Zhang2015Contract} and cooperative spectrum sharing\cite{Duan2014Cooperative}. To the best knowledge of the authors, we are the first to apply contract theory in the RF energy trading process in RFEH-based IoT systems. To design the incentive mechanism in such a scenario is challenging because the DAP needs to choose and reward the most efficient EAPs without knowing their channel conditions and energy costs.

\subsection{Stackelberg Game}
Stackelberg game has been widely used in wireless communications to model the interactions of steatitic parties, such as physical layer security\cite{zhang2012physical}, resource management for LTE-unlicensed\cite{zhang2017multi}, cognitive radio\cite{xu2014stackelberg} and wireless energy harvesting\cite{sarma2016robust,Chen2015stackelberg,yin2014optimal,chen2015distributed,Zhang2017Balancing}.
In \cite{yin2014optimal}, the authors considered cooperative spectrum sharing with one primary user (PU) and one secondary user (SU), which harvests energy from ambient radio signal. The Stackelberg game was used to design the the SU's optimal cooperation strategy. In \cite{chen2015distributed}, simultaneous wireless information and power transfer (SWIPT) in relay interference channels was considered, where multiple source-destination pairs communicate through their dedicated energy harvesting relays. The optimal power splitting ratios for all relays were derived by the formulated Stackelberg game. In \cite{Zhang2017Balancing}, the authors formulated a stochastic Stackelberg game to study the delay optimal power allocation scheme. There is a recent paper addressing the EAP assisted wireless energy harvesting by using Stackelberg game\cite{sarma2016robust}, but their system settings are different from those in our work. Only one EAP with multiple antennas was considered in this paper, and the EAP acts as the seller and the base station (BS) as the buyer in behalf of its sensors. A more relevant work is \cite{Chen2015stackelberg}, where an incentive mechanism was designed for the system with the similar setup where monetary reward with unified pricing were provided by the DAP to motivate third-party EAPs to assist the charging process. Here the unified pricing means prices per unit energy for different EAPs are the same. However, in this paper, we consider discriminative pricing scheme of Stackelberg game for the heterogenous EAPs in our work, including unified pricing scheme as a special case. Here discriminative pricing means that we set different energy prices for different EAPs to fully exploit their potentials for the enery charging service. Moreover, we extend the Stackelberg game to asymmetric information scenario by optimizing expected utility function of the DAP, instead of optimizing instantaneous utility function of the DAP in the classical Stackelberg game.

\section{System Model}
\begin{figure}
\centering \scalebox{0.45}{\includegraphics{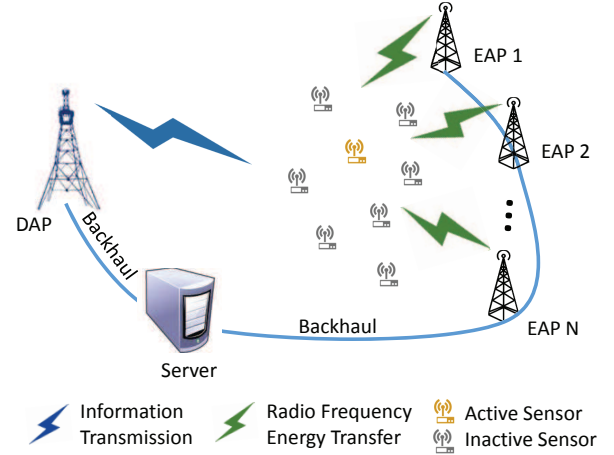}}
\caption{System model for the radio frequency energy harvesting assisted Internet of Things network \label{fig:system_model}}
\end{figure}
Consider a wireless energy harvesting-based IoT system consisting of one DAP and $N$ EAPs belonging to different operators, which are connected to constant power supplies and connected to the server by backhauls, as shown in Fig. \ref{fig:system_model}. The DAP is responsible for collecting various data from several wireless-powered sensors within its serving region. Without embedded energy supplies, the wireless-powered sensors fully rely on the energy harvested from the RF signals emitted by the EAPs to transmit its information to the DAP. For simplicity, we consider that the RF energy transfer and information transmission are performed over orthogonal bandwidth. For analytical tractability, time division-based transmission among sensors is adopted, i.e., there is only one active sensor during each transmission block. Hereafter, we refer to this active sensor as the information source. Besides, all the nodes in the system are assumed to be equipped with single antenna and operate in the half-duplex mode.

We consider that the energy-carrying signals sent by the EAPs are independent and identically distributed (i.i.d.) random variables with zero mean and unit variance. Note that no coordination between the EAPs is needed since independent signals are transmitted. All channels are assumed to experience independent slow and flat fading, where the channel gains remain constant during each transmission block and change independently from one block to another\footnote{Pilots are broadcasted by the active sensor to allow the DAP and EAPs to estimate the channels. So the DAP is aware of the channel gain from the DAP to the sensor and each EAP is aware of the channel gain from this EAP to the sensor. But the DAP generally is not aware of the channel gains from the EAPs to the active sensor. The energy consumption of channel estimation is ignored.}. The information source rectifies the RF signals received from the EAPs and uses the harvested energy to transmit its information. The time duration of every transmission block is normalized to one. So we use ``energy'' and ``power'' interchangeably hereafter. The amount of energy harvested by the information source during one transmission block can be expressed as
\begin{equation}\label{eq:Energy}
  E_{s} = \eta\sum\limits_{m=1}^{N} p_mG_{m,s},
\end{equation}
where $0<\eta<1$ is the energy harvesting efficiency, $p_m$ is the charging power of the $m$th EAP, and $G_{m,s}$ is the channel power gain between the $m$th EAP and the information source. Note that the noise is ignored in (\ref{eq:Energy}) since it is practically negligible at the energy receiver.

The harvest-use protocol is considered in this paper\cite{Krikidis2013harvest}. More specifically, the information source will use the harvested energy to perform instantaneous information transmission to the DAP. We consider a battery-free design which indicates that the sensor only has a storage device like supercapacitor to hold the harvested energy for a short period of time, e.g., among its scheduled transmission block. Hence the sensor exhausts all the harvested energy in each transmission block, so the sensor's energy storage device is emptied at the beginning of the transmission block. This battery-free design can reduce the complexity and costs of the sensors, which is particularly suitable for the considered IoT sensing applications and has been adopted by other applications \cite{Ju2014Throughput,Lu2015Wireless}. The transmit power of the information source is thus given by
\begin{equation}\label{eq:P_s}
  P_{s} = E_{s}.
\end{equation}
Then, the received signal-to-noise ratio (SNR) at the DAP is given by
\begin{equation}\label{eq:beta}
  \beta = \frac{p_s G_{a,s}}{N_0},
\end{equation}
where $N_0$ is the noise power at the DAP, and $G_{a,s}$ is the channel power gain from the information source to the DAP. Note that the time duration for each transmission block is normalized as one, such that the channel capacity and throughput can be used interchangeably. Hence the achievable throughput (bps) from the information source to the DAP can be expressed by
\begin{equation}\label{eq:Ras}
\begin{aligned}
R_{SD} &= W \log_2 (1+\beta) \\
&= W \log_2 \left( 1+\frac{\eta G_{a,s}}{N_0} \sum\limits_{m=1}^{N} p_m G_{m,s} \right),
\end{aligned}
\end{equation}
where $W$ is the bandwidth. We define the received signal power at the active sensor contributed by the $m$th EAP\footnote{Note that the received power contributed by each EAP is assumed to be distinguishable by considering that the EAPs work in disjoint narrow bandwidth. } as $q_m = p_m G_{m,s}$, and set $\gamma = \eta G_{a,s}/N_0$ for notation simplicity. We can thus simplify (4) as
\begin{equation}\label{eq:Ras2}
R_{SD} = W \log_2 \left( 1 + \gamma \sum\limits_{m=1}^{N} q_m \right).
\end{equation}

As we mentioned before, the EAPs considered in the system belong to different operators and act strategically, so they would not help the DAP voluntarily. To address this issue, the DAP needs to provide rewards to motive the EAPs to charge its sensors. In this paper, we mainly focus on monetary rewards as the incentive between operators. Other forms of rewards, such as physical resources (e.g., spectrum), or free offloading data between operator can also be used. To efficiently exploit the EAPs to achieve a good throughput, the following questions need to be answered in asymmetric information: \emph{Which EAPs the DAP should hire, how much energy should be requested from the hired EAPs, and how many rewards should be given to the hired EAPs?}

\section{Incentive Mechanisms with Asymmetric Information}
To answer the above questions in the practical scenario with asymmetric information, we first model the strategic interactions between the DAP and EAPs as a Stackelberg game. We will first re-design and re-analyze the existing Stackelberg game into the considered scenario by defining and optimizing expected utility of the DAP. In economic theories, contract theory is a powerful tool to design incentive mechanism in information asymmetry. As such, we will then reformulate the incentive mechanism problem into an optimal contract design problem.

\subsection{Stackelberg Game with Asymmetric Information}
In this part, we will first explore how to design a Stackelberg game to model the interactions between the DAP and the EAPs, and then derive the optimal energy prices under asymmetric information. In the proposed Stackelberg game with asymmetric information, the DAP provides rewards to the EAPs for charging its sensors. The DAP is the leader of the formulated Stackelberg game, which imposes energy prices for the EAPs. The DAP optimizes the energy prices to maximize its expected utility function defined as the difference between the benefits obtained from the achievable throughput and its total payment to the EAPs. The EAPs are the followers which optimize their utility functions defined as the payment received from the DAP minus its energy cost.

\subsubsection{Stackelberg Game Formulation}
The channel conditions and energy costs of EAPs are different, so the efficiencies of EAPs to charge the sensor are distinct. To fully exploit the potential of the EAPs, a discriminative pricing strategy is considered, i.e., the DAP can impose different prices of per unit energy harvested from different EAPs. Let $\pmb q = [q_1,q_2,\dots,q_N]^T$ as the vector of the active sensor's received power from EAPs, with $q_m$ denoting the received power from the $m$th EAP, and let $\pmb \lambda = [\lambda_1,\lambda_2,\dots,\lambda_N]^T$ as the vector of prices per unit energy harvested from EAPs, with $\lambda_m \ge 0$ denoting the price per unit energy harvested from the $m$th EAP. The total payment of the DAP to the EAPs is
\begin{equation}
\Lambda(\pmb \lambda,\pmb q)=\sum\limits_{m=1}^{N} \lambda_m q_m,
\end{equation}
where $q_m$ is the received energy from $m$th EAP. Since the aim of the DAP is to achieve higher throughput at the cost of less rewards to the EAPs, the utility function of the DAP can be defined as
\begin{equation}
U_{DAP}^S(\pmb \lambda,\pmb q)=R_{SD} - c \Lambda(\pmb \lambda,\pmb q).
\end{equation}
where $R_{SD}$ is the achievable throughput defined in (\ref{eq:Ras}) and (\ref{eq:Ras2}), $c$ is the unit cost of the DAP, which is normalized as $c = 1$ without loss of generality hereafter.

Each EAP is modeled as a follower which would like to maximize its individual profit, the utility of which is defined as
\begin{equation}\label{eq:USEAP_Asym}
U_k^S(\lambda_k,q_k) = \lambda_k q_k - \mathcal{C}_k (p_k),
\end{equation}
where $p_k = q_k /G_{k,s}$ is the transmit power of the $k$th EAP, and $\mathcal{C}_k(\cdot)$ is used to model the energy cost of the $k$th EAP, given by
\begin{equation}
\mathcal{C}_k (x) = a_k x^2,
\end{equation}
where $a_k > 0$ is the energy cost coefficient. Note that the above quadratic function has been widely adopted in the energy trading market to model the energy cost\cite{Mohsenian2010Autonomous}. The utility function of the $k$th EAP becomes
\begin{equation}\label{eq:USEAP_theta}
U_k^S(\lambda_k,q_k) = \lambda_k q_k - \frac{a_k}{G_{k,s}^2}q_k^2,
\end{equation}

Since the DAP is not aware of each EAP's exact energy cost coefficient and channel gain, it can sort EAPs into some discrete types and use the statistical distributions of the types of EAPs from historical data to optimize the expected utility of the DAP. Specifically, we define the type of the $k$th EAP as
\begin{equation}\label{eq:theta}
\theta_k := \frac{G_{k,s}^2}{a_k},
\end{equation}
which suggests that the larger the channel gain $G_{k,s}$ between the EAP and the information source, and/or the lower the unit energy cost coefficient $a_k$, the higher the type of the EAP. Without loss of generality, we assume that there are totally $K$ types of EAPs with $\theta_1 < \theta_2 < \dots < \theta_K$. In this definition, the higher type EAP has better channel quality and/or lower energy cost coefficient. Note that since $a_k>0$ and $G_{k,s}>0$, $\theta >0$ holds. Using (\ref{eq:theta}), the EAP's utility can be rewritten as
\begin{equation}\label{eq:UEAPSim}
U_k^S(\lambda_k,q_k) = \lambda_k q_k - \frac{q_k^2}{\theta_k}.
\end{equation}
Assume there are $N_k$ EAPs belonging to the $k$th type, we thus have $\sum\nolimits_{k=1}^{K} N_k = N$. We then can rewrite the DAP's utility according to the types of EAPs as
\begin{equation}\label{eq:UDAPNew}
U_{DAP}^S(\lambda_k,q_k) = W \log_2 \left( 1 + \gamma \sum\limits_{k=1}^{K} N_k q_k \right) - \sum\limits_{k=1}^K N_k \lambda_k q_k.
\end{equation}

In this section, we consider a scenario with strong information asymmetry. In such a scenario, the DAP is only aware of the total number of EAPs (i.e., $N$) and the distribution of each type. But it does not know each EAP's private type and thus it does not know the exact number of EAPs belonging to each type $k$ (i.e., $N_k$). As such, the DAP needs to optimize its expected utility over the possibilities of all possible combinations of $N_k$. The expected utility of the DAP with $N$ EAPs is given by
\begin{equation}\label{eq:Expect_U_DAP}
\begin{aligned}
  &\mathbb{E}\{U_{DAP}^S(\pmb \lambda,\pmb q)\} = \sum\limits_{n_1=0}^{N}\sum\limits_{n_2=0}^{N-n_1}\dots \sum\limits_{n_{K-1}=0}^{N-\sum\nolimits_{i=0}^{K-2}n_i}\\
  &\left\{ \Phi_{n_1,\dots,n_K} \left[ W \log_2 \left( 1 + \gamma \sum\limits_{k=1}^{K} n_k q_k \right) - \sum\limits_{k=1}^K n_k \lambda_k q_k \right] \right\},
\end{aligned}
\end{equation}
where $n_K = N - \sum\nolimits_{i=0}^{K-1}n_i$ is known after giving $n_1,n_2,\dots,n_{K-1}$ since the DAP knows the total number $N$ of EAPs, and $\Phi_{n_1,\dots,n_K}$ is the probability of a certain combination of the number of EAPs belonging to each type (i.e., $N_k,\{k=1,2,\dots,K\}$). We assume that all types are uniformly distributed. The probability of one EAP belonging to each type is the same, which is $1/K$. In this case, $\Phi_{n_1,\dots,n_K}$ can be calculated as
\begin{equation}\label{eq:Prob_combine}
\begin{aligned}
  \Phi_{n_1,\dots,n_K}&=\textbf{Pr} \left( N_1=n_1, N_2=n_2, \dots, N_{k}=n_{k} \right)\\
  &=\frac{N!}{n_1! n_2! \dots n_K! K^N}
\end{aligned}
\end{equation}

Since the DAP is not aware of the EAPs' private information, it can only optimize the expectation of DAP's utility function by using the statistical knowledge of the EAP's private information. So the optimization problem for the DAP or the leader-level game can be formulated as
\begin{equation}\label{eq:opt_DAP_asym}
\begin{aligned}
  \mathbf{(P4.1):} \quad & \max\limits_{\pmb \lambda} \mathbb{E}\{U_{DAP}^S(\pmb \lambda,\pmb q)\} \\
  s.t. \quad & \pmb \lambda \ge \pmb 0
\end{aligned}
\end{equation}
Accordingly, the optimization problem for the EAP with $k$th type or the follower-level game can be formulated as
\begin{equation}\label{eq:opt_EAP_asym}
\begin{aligned}
  \mathbf{(P4.2):} \quad & \max\limits_{q_k} U_k^S(\lambda_k,q_k) \\
  s.t. \quad & q_k \ge 0 \quad\quad\quad\quad\quad\quad
\end{aligned}
\end{equation}
Note that although the DAP does not know the EAP's exact type, it knows the type set of EAPs.

The Stackelberg game for the considered system has been formulated by combining problems (P4.1) and (P4.2). In this game, the DAP is the leader who aims to solve problem (P4.1), while the EAPs are the followers who aim to solve their individual problem (P4.2). Once a game is formulated, the subsequent task is to find its equilibrium point(s). For the solution of the formulated game, the most well-known concept is the Stackelberg equilibrium (SE), which can be formally defined as follows:

\begin{definition}[\textbf{Stackelberg equilibrium (SE)}]
We use $\pmb \lambda^* = [\lambda_1^*,\lambda_2^*,\dots,\lambda_N^*]^T$ and $\pmb q^* = [q_1^*,q_2^*,\dots,q_N^*]^T$ to denote the solutions of problems (P4.1) and (P4.2), respectively. Then, $\left({\pmb \lambda}^*,{\pmb q}^* \right)$ is a SE of the formulated game if the following conditions are satisfied
\begin{equation}\label{eq:def_SE_1}
{U_{DAP}^S}\left( {{\pmb \lambda}^* ,{\pmb q}^*} \right) \ge
{U_{DAP}^S}\left( {{\pmb \lambda} ,{\pmb q}^*} \right),
\end{equation}
\begin{equation}\label{eq:def_SE_2}
{U_m^S}\left( {{q_m^*},{\lambda_m^*}} \right) \ge
{U_m^S}\left( {{q_m},{\lambda_m^*}} \right),
\end{equation}
for all $\pmb \lambda \ge  \pmb 0$ and $\pmb q \ge \pmb 0$.
\end{definition}

\subsubsection{Analysis of the Proposed Game}
In this part, we will analyze the SE of the proposed Stackelberg game with asymmetric information.

It can be observed from (\ref{eq:USEAP_theta}) that for given values of $\lambda_k$, the utility function of the the EAP with the $k$th type is a quadratic function of its contributed power $q_k$ to the active sensor and the constraint is affine, which indicates that the problem (P4.2) is a convex optimization problem. Thus, it is straightforward to obtain its optimal solution given in the following lemma:
\begin{lemma}\label{lemma:EAP_optimal_solution_asym}
For given values of $\lambda_k$, the optimal $q_k^*$ of the EAP with $k$th type for problem (P4.2) is given by
\begin{align}\label{eq:EAP_optimal_solution_asym}
{q_k^{{*}}} = \frac{\theta_k \lambda_k}{2}.
\end{align}
\end{lemma}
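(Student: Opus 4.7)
The plan is to solve problem (P4.2) by standard single-variable convex optimization, since the objective in (\ref{eq:UEAPSim}) is a strictly concave quadratic in $q_k$ once $\lambda_k$ is fixed. First I would note that the coefficient of $q_k^2$ is $-1/\theta_k$, and by the definition in (\ref{eq:theta}) together with $a_k>0$ and $G_{k,s}>0$, we have $\theta_k>0$. Hence $U_k^S(\lambda_k,\cdot)$ is strictly concave on $\mathbb{R}$, the feasible set $\{q_k\ge 0\}$ is closed and convex, and any KKT point is the unique global maximizer.

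Next I would compute the stationarity condition. Differentiating (\ref{eq:UEAPSim}) with respect to $q_k$ and setting the derivative to zero gives
\begin{equation*}
\frac{\partial U_k^S}{\partial q_k} = \lambda_k - \frac{2 q_k}{\theta_k} = 0,
\end{equation*}
which yields the candidate $q_k = \theta_k \lambda_k/2$. Because $\lambda_k\ge 0$ by the constraint in (P4.1) and $\theta_k>0$, this candidate automatically satisfies the non-negativity constraint $q_k\ge 0$, so the inequality multiplier vanishes and the KKT conditions are met at this point.

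Finally, I would invoke strict concavity together with feasibility of the stationary point to conclude that $q_k^*=\theta_k\lambda_k/2$ is the unique global optimum, establishing (\ref{eq:EAP_optimal_solution_asym}). Since the derivation is a direct application of first-order optimality to a concave quadratic with an inactive non-negativity constraint, there is essentially no technical obstacle here; the only subtlety worth flagging is ensuring that the non-negativity constraint is indeed inactive, which reduces to observing the sign conditions $\theta_k>0$ and $\lambda_k\ge 0$ already guaranteed by the model and the leader's feasible set.
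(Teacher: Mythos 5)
Your proof is correct and follows essentially the same route as the paper, which simply observes that the objective of (P4.2) is a concave quadratic in $q_k$; you merely spell out the first-order condition $\lambda_k - 2q_k/\theta_k = 0$ and verify that the stationary point is feasible because $\theta_k>0$ and $\lambda_k\ge 0$. No gaps.
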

\begin{proof}
The proof of this lemma follows by noting that the objective function of problem (P4.2) given in (\ref{eq:opt_EAP_asym}) is a quadratic function in terms of $q_k$.
\end{proof}
It can be observed from Lemma \ref{lemma:EAP_optimal_solution_asym} that for the same energy price, an EAP with better channel gain and/or less energy cost would like to contribute more power to the sensors.

Then we replace $q_k$ with $q_k^*$ in problem (P4.1), the optimization problem at the DAP side can be expressed as
\begin{equation}\label{eq:opt_DAP_uniform_asym}
\begin{aligned}
  \mathbf{(P4.3):} \quad & \max\limits_{\pmb \lambda} \mathbb{E}\{U_{DAP}^S(\pmb \lambda,\pmb q^*)\} \\
  s.t. \quad & \pmb \lambda \ge \pmb 0
\end{aligned}
\end{equation}
where $\mathbb{E}\{U_{DAP}^S(\pmb \lambda,\pmb q^*)\}$ is given by
\begin{equation}\label{eq:USEAP_Bar_asym}
\begin{aligned}
  &\mathbb{E}\{U_{DAP}^S(\pmb \lambda,\pmb q^*)\} = \sum\limits_{n_1=0}^{N}\sum\limits_{n_2=0}^{N-n_1}\dots \sum\limits_{n_{K-1}=0}^{N-\sum\nolimits_{i=0}^{K-2}n_i}\\
  &\left\{ \Phi_{n_1,\dots,n_K} \left[ W \log_2 \left( 1 + \frac{\gamma}{2} \sum\limits_{k=1}^{K} n_k \theta_k \lambda_k \right) \right. \right.  \\
  &\left. \left. - \frac{1}{2}\sum\limits_{k=1}^K n_k \theta_k \lambda_k^2 \right] \right\},
\end{aligned}
\end{equation}
where $\Phi_{n_1,\dots,n_K}$ is given in (\ref{eq:Prob_combine}).

We can observe that problem (P4.3) is a concave function in terms of vector $\pmb \lambda$. This is because each term in the summation is composed by a logarithm function (concave) and quadratic functions (concave), and the summation of concave functions are still a concave function. Moreover, the constraint is affine. Problem (P4.3) is then a convex optimization problem. So we can numerically solve the system of equations given by the KKT conditions to get the solution of problem (P4.3). According the KKT conditions, we can also get some insight about the structure of the solution and thus have the following proposition.
\begin{proposition}\label{proposition:KKT_solution_insight}
The optimal solution to problem (P4.3) have the following structure:
\begin{equation}\label{eq:KKT_sol_insight}
\lambda_1^* = \lambda_2^* = \dots = \lambda_N^*.
\end{equation}
\end{proposition}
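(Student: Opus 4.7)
The plan is to combine a permutation-symmetry observation with the strict concavity of the expected utility (already verified in the paper just above) to force the unique maximizer to lie on the diagonal $\lambda_1=\lambda_2=\cdots=\lambda_N$. As a first step, I would re-express the expected utility as an expectation taken directly over the i.i.d.\ type vector $\pmb t = (t_1,\dots,t_N)$, where each $t_m$ is uniformly distributed on $\{1,\dots,K\}$:
\begin{equation*}
\mathbb{E}\{U_{DAP}^S(\pmb\lambda,\pmb q^*)\} = \mathbb{E}_{\pmb t}\!\left[W\log_2\!\left(1+\tfrac{\gamma}{2}\sum_{m=1}^{N}\theta_{t_m}\lambda_m\right) - \tfrac{1}{2}\sum_{m=1}^{N}\theta_{t_m}\lambda_m^{2}\right].
\end{equation*}
Grouping the $N$ EAPs according to the value of $t_m$ and invoking the multinomial weight (16) recovers exactly (22); this rewriting is merely a change of bookkeeping, but it makes the dependence on the EAP-indexed prices $\lambda_m$ manifest.

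Next I would verify two properties of this objective. First, \emph{permutation symmetry}: for any permutation $\pi$ of $\{1,\dots,N\}$, the substitution $\lambda_m\mapsto\lambda_{\pi(m)}$ inside the expectation can be absorbed by the matching relabelling $t_m\mapsto t_{\pi(m)}$, which leaves the joint law of $\pmb t$ unchanged because the $t_m$'s are i.i.d.; hence the objective is a fully symmetric function of $\pmb\lambda$. Second, \emph{strict concavity in $\pmb\lambda$}: for every realization of $\pmb t$ the logarithmic term is concave (composition of $\log$ with an affine function) and the quadratic term $-\tfrac{1}{2}\sum_m\theta_{t_m}\lambda_m^{2}$ is strictly concave since each $\theta_{t_m}>0$; strict concavity is preserved under the expectation.

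Strict concavity on the convex feasible set $\{\pmb\lambda\ge\pmb 0\}$ ensures that the maximizer $\pmb\lambda^{*}$ is unique, and permutation symmetry implies that $\pi\pmb\lambda^{*}$ is also a maximizer for every $\pi$; uniqueness then forces $\pi\pmb\lambda^{*}=\pmb\lambda^{*}$, equivalently $\lambda_1^{*}=\lambda_2^{*}=\cdots=\lambda_N^{*}$. The main obstacle is purely notational: the multinomial-sum form of (22) hides the exchangeability among the $N$ EAPs, so one must first re-expose this exchangeability by returning to the i.i.d.\ type vector $\pmb t$. Once that is done the strict-concavity-plus-symmetry argument is essentially immediate; alternatively, evaluating the KKT first-order conditions at a symmetric point shows they reduce to the same scalar equation for every index, and strict concavity certifies that this critical point is the unique global maximum.
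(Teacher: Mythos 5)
Your proof is correct in substance but follows a genuinely different route from the paper's. The paper proves Proposition 1 by writing out the full KKT system for (P4.3), setting all multipliers $\mu_k$ to zero, and observing that the resulting first-order equations all share the same right-hand side $W\log_2(e)\gamma\,\Omega(\pmb\lambda)$, which forces the prices to coincide; you instead re-parametrize the expectation over the i.i.d.\ type vector $\pmb t$, and combine permutation symmetry with strict concavity to conclude that the unique maximizer must be fixed by every coordinate permutation. Your argument is cleaner and arguably more robust: the paper's passage from its first-order conditions to the system $\lambda_k \Delta = W\log_2(e)\gamma\,\Omega(\pmb\lambda)$ silently drops the factors $n_k\theta_k$ that multiply both terms inside the sum over $(n_1,\dots,n_K)$, and since $n_k$ is a summation variable this cancellation is not legitimate as written; your symmetry argument sidesteps that step entirely. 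The KKT alternative you mention at the end is essentially the paper's approach.

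One point deserves care. Your claim that grouping the EAPs by type ``recovers exactly (22)'' holds only when the prices are constant within each type class: for a general per-EAP price vector, $\sum_m \theta_{t_m}\lambda_m = \sum_k \theta_k \sum_{m:\,t_m=k}\lambda_m$, which does not collapse to $\sum_k n_k\theta_k\lambda_k$. The type-indexed objective in (22) is therefore \emph{not} permutation-symmetric in $(\lambda_1,\dots,\lambda_K)$ (each $\lambda_k$ is tied to a distinct $\theta_k$), and the symmetry argument cannot be run on (22) directly. Your per-EAP parametrization is the faithful rendering of the decision variable $\pmb\lambda\in\mathbb{R}^N$ as declared in (P4.1) and in the proposition statement itself, so the argument goes through there; but you should state explicitly that you are proving equality of the $N$ per-EAP prices in that formulation, and that (22) is the restriction of your objective to price vectors that are uniform within each type, rather than asserting the two expressions are identical for all $\pmb\lambda$.
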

\begin{proof}
See Appendix \ref{appendix:Prop_2}.
\end{proof}

We surprisingly find that the optimal energy prices for different EAPs are the same, even if we impose discriminative prices for different EAPs in the original design of the Stackelberg game. This is because the energy price of unit received power is used in our pricing scheme. The DAP has no motivation to treat the received power from EAPs differently, so a unified pricing per unit received power is achieved.

Lack of complete information, the performance of the Stackelberg game with asymmetric information is worse than that with complete information. Note that in the considered scenario, there are $N$ EAPs in the market. In each channel realization, each EAP in the market selects one EAP type from a EAP type set randomly. In each channel realization, the Stackelberg game under complete information can adapt to the instantaneous combination of EAP types and calculate an optimal price for each instantaneous combination of EAP types by optimizing the instantaneous utility function. While the Stackelberg game under asymmetric information cannot adapt to instantaneous combination of EAP types, since it can only calculate a single price for all possible combinations of EAP types. Therefore, the reason that Stackelberg game under asymmetric information is worse than Stackelberg game under complete information is that it fails to adapt to the change of the instantaneous combinations of EAP types, i.e., the change of wireless channel conditions. This deduction will be verified later in the simulation part.

\subsection{Optimal Contract with Asymmetric Information}
As we mentioned above, the performance of the Stackelberg game is degraded under asymmetric information. To improve the performance under asymmetric information, the DAP could design and offer a contract to effectively motivate the EAPs to charge its sensors. Note that in Stackelberg game, the EAP has the freedom to optimize its own utility by choosing any amount of received signal power at the active sensor when the DAP imposes some given energy price. Different from Stackelberg game, limited options are allowed for EAPs to select in contract theory. Specifically, a group of energy-reward pairs (referred to as contract items) are designed. A contract consisting of a group of contract items is provided to the EAPs. The EAPs will choose a contract item at its discretion to maximize its benefit. By properly designing the contract item, the DAP can induce the EAP to expose its type by its selection of the contract item and thus relieve the information asymmetry.

In the following, we will formulate the optimal contract, characterize its feasibility conditions and provide optimal solution for the formulated contract.

\subsubsection{Contract Formulation}
In this part, we will formulate a contract for the RF energy trading between the DAP and EAPs, characterize its feasibility conditions, and derive the optimal contract subject to the feasibility conditions.

A contract including a series of energy-reward pairs $(q_k, \pi_k)$ is designed to maximize the expectation of the DAP's utility. For the $k$th type EAP, $q_k$ is the received power contributed by $k$th EAP and $\pi_k$ is the reward paid to the $k$th EAP as the incentive for the corresponding contribution.

We first rewrite the utility functions of the DAP and EAPs according to contract items. The DAP's utility function is thus given by
\begin{equation}\label{eq:UDAPNew_contract}
U_{DAP}^C(\pmb \pi,\pmb q) = W \log_2 \left( 1 + \gamma \sum\limits_{k=1}^{K} N_k q_k \right) - \sum\limits_{k=1}^K N_k \pi_k,
\end{equation}
where $\pmb \pi=[\pi_1,\pi_2,\dots,\pi_K]^T$ is the reward paid by the DAP to the EAP with the $k$th type for its corresponding contribution $\pmb q=[q_1,q_2,\dots,q_K]^T$. Similar to (\ref{eq:Expect_U_DAP}), the expectation of $U_{DAP}^C(\pmb \pi,\pmb q)$ can be represented as
\begin{equation}\label{eq:Expect_U_DAP_contract}
\begin{aligned}
  &\mathbb{E}\{U_{DAP}^C(\pmb \pi,\pmb q)\} = \sum\limits_{n_1=0}^{N}\sum\limits_{n_2=0}^{N-n_1}\dots \sum\limits_{n_{K-1}=0}^{N-\sum\nolimits_{i=0}^{K-2}n_i}\\
  &\left\{ \Phi_{n_1,\dots,n_K} \left[ W \log_2 \left( 1 + \gamma \sum\limits_{k=1}^{K} n_k q_k \right) - \sum\limits_{k=1}^K n_k \pi_k \right] \right\},
\end{aligned}
\end{equation}
where $n_K = N - \sum\nolimits_{i=0}^{K-1}n_i$ is known after giving $n_1,n_2,\dots,n_{K-1}$ since the DAP knows the total number $N$ of EAPs, and $\Phi_{n_1,\dots,n_K}$ is the probability of a certain combination of the number of EAPs belonging to each type (i.e., $N_k,\{k=1,2,\dots,K\}$), which is given by (\ref{eq:Prob_combine}). And then the utility function of the EAP with the $k$th type is rewritten as
\begin{equation}\label{eq:UEAPSim_contract}
U_k^C(\pi_k,q_k) = \pi_k - \frac{q_k^2}{\theta_k}.
\end{equation}
The social welfare is defined as the summation of the utilities of the DAP and all $N$ EAPs, given by
\begin{equation}\label{eq:SocWel}
\begin{aligned}
  \Gamma(\pmb \pi,\pmb q) &= U_{DAP}^C(\pmb \pi,\pmb q) + \sum\limits_{k=1}^{K} N_k U_k^C(\pi_k,q_k) \\
  &= W \log_2 \left( 1+ \gamma \sum\limits_{k=1}^K N_k q_k \right) - \sum\limits_{k=1}^K \frac{N_k q_k^2}{\theta_k}.
\end{aligned}
\end{equation}
It can be seen that the internal transfers, i.e., rewards, are cancelled in the social welfare, which is consistent with the aim to maximize the efficiency of the whole system, i.e., achieving more throughput at the cost of less energy consumptions.

Next, we will figure out the feasibility conditions. In our design, to encourage the EAPs to participate in the charging process and ensure that each EAP only chooses the contract item designed for its type, the following individual rationality (IR) and incentive compatibility (IC) constraints should be satisfied \cite{bolton2005contract}.
\begin{definition}[\textbf{Individual Rationality (IR)}]
 The contract item that an EAP chooses should ensure a nonnegative utility, i.e.,
\begin{equation}\label{eq:IR}
U_k^C(\pi_k,q_k) = \pi_k - \frac{q_k^2}{\theta_k} \ge 0, \forall k \in \{1,\dots,K\}.
\end{equation}
\end{definition}

\begin{definition}[\textbf{Incentive Compatibility (IC)}]
An EAP of any type $k$ prefers to choose the contract item $(q_k, \pi_k)$ designed for its type, instead of any other contract item $(q_j, \pi_j), \forall j \in \{1,\dots,K\}$ and $j \ne k$, given by
\begin{equation}\label{eq:IC}
  \pi_k - \frac{q_k^2}{\theta_k} \ge \pi_j - \frac{q_j^2}{\theta_k}, \forall k,j \in \{1,\dots,K\}.
\end{equation}
\end{definition}

The IR condition requires that the received reward of each EAP should compensate the cost of its consumed energy when it participates in the energy trading. If $U_k \le 0$, the EAP will choose not to charge the information source for the DAP. We define this case as $(q_k = 0, \pi_k = 0)$. The IC condition ensures that each EAP automatically selects the contract item designed for its corresponding type. The type of each EAP is thus revealed to the DAP, which is called ``self-reveal''. If a contract satisfies the IR and IC constraints, we refer to the contract as a feasible contract.

Following the idea of contract theory\cite{bolton2005contract}, the DAP aims at maximizing its expected utility subjecting to the constraints of IR and IC given in (\ref{eq:IR}) and (\ref{eq:IC}). Thus, the optimal contract is the solution to the following optimization problem
\begin{equation}\label{eq:CntFrm}
\begin{aligned}
   \mathbf{(P4.4):} \qquad & \max\limits_{(\pmb \pi,\pmb q)}  \mathbb{E}\{U_{DAP}^C(\pmb \pi,\pmb q)\} \\
  s.t. \quad & \pi_k - \frac{q_k^2}{\theta_k} \ge 0, \forall k \in \{1,\dots,K\}, \\
  &\pi_k - \frac{q_k^2}{\theta_k} \ge \pi_j - \frac{q_j^2}{\theta_k}, \forall k,j \in \{1,\dots,K\},\\
  &q_k \ge 0, \pi_k \ge 0, \theta_k \ge 0, \forall k \in \{1,\dots,K\}.
\end{aligned}
\end{equation}
The first two constraints correspond to IR and IC, respectively. Note that the EAP will reveal its private type truthfully with the IR and IC constraints. Specifically, the IR condition ensures the EAP's participation and the IC condition ensures that each EAP selects the contract item designed for its corresponding type to gain highest payoff.

\subsubsection{Constraint Reduction}
There are $K$ IR constraints and $K(K-1)$ IC constraints in (\ref{eq:CntFrm}), which are non-convex and couple different EAPs together. It is hard to solve (\ref{eq:CntFrm}) directly due to the complicated constraints. Motivated by this, in the subsection we first reduce the constraints of (\ref{eq:CntFrm}) and transform it.

We first realize that the following necessary conditions can be derived from the IR and IC constraints.

\begin{lemma}\label{lemma:pi2q}
For any feasible contract, $\pi_i > \pi_j$ if and only if $q_i > q_j$, $\forall i,j \in \{1,\dots,K\}$.
\end{lemma}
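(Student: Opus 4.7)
The plan is to exploit the pair of symmetric incentive-compatibility constraints available for any two types $i$ and $j$. By applying the IC condition in (\ref{eq:IC}) once for the type-$i$ EAP (who must weakly prefer $(q_i,\pi_i)$ over $(q_j,\pi_j)$) and once for the type-$j$ EAP (who must weakly prefer $(q_j,\pi_j)$ over $(q_i,\pi_i)$), I would obtain the two inequalities
\begin{equation*}
\pi_i - \pi_j \ge \frac{q_i^2 - q_j^2}{\theta_i}, \qquad \pi_i - \pi_j \le \frac{q_i^2 - q_j^2}{\theta_j}.
\end{equation*}
These sandwich $\pi_i - \pi_j$ between two quantities that are both proportional to $q_i^2 - q_j^2$ via strictly positive constants $1/\theta_i$ and $1/\theta_j$ (using $\theta_k > 0$). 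This sandwich is the workhorse of the proof, and deriving it is the only real step; the two equivalences will follow immediately by sign analysis.

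For the direction $q_i > q_j \Rightarrow \pi_i > \pi_j$, I would use the lower bound: since $q_i,q_j \ge 0$ with $q_i > q_j$, we have $q_i^2 - q_j^2 > 0$, and dividing by $\theta_i > 0$ keeps the sign strictly positive, so $\pi_i - \pi_j > 0$. For the converse $\pi_i > \pi_j \Rightarrow q_i > q_j$, I would use the upper bound: $\pi_i - \pi_j > 0$ forces $(q_i^2 - q_j^2)/\theta_j > 0$, hence $q_i^2 > q_j^2$; combined with the non-negativity constraint $q_k \ge 0$ in problem (P4.4), this yields $q_i > q_j$.

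The one place requiring a little care, and the closest thing to an obstacle, is the final step where $q_i^2 > q_j^2$ must be upgraded to $q_i > q_j$. This upgrade relies on $q_k \ge 0$ for all $k$, which is part of the feasibility constraints of (P4.4); without that non-negativity the conclusion would fail in general. Everything else is algebraic manipulation of the IC inequalities, and no use of the IR constraints or of the DAP's objective is needed, which is consistent with the lemma being a purely structural consequence of incentive compatibility.
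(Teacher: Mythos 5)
Your proposal is correct and follows essentially the same route as the paper: both directions come from the same two incentive-compatibility inequalities (type $i$'s preference for its own item giving the lower bound on $\pi_i-\pi_j$, and type $j$'s giving the upper bound), with the sign read off using $\theta_k>0$ and $q_k\ge 0$. Your explicit remark that the non-negativity of $q_k$ is needed to pass from $q_i^2>q_j^2$ to $q_i>q_j$ is a point the paper's proof glosses over, but the argument is otherwise identical.
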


\begin{proof}
See Appendix \ref{appendix:lemma_pi2q}.
\end{proof}

Lemma \ref{lemma:pi2q} shows that the EAP contributing more received power at the information source will receive more reward.

\begin{lemma}\label{lemma:pi2q2}
For any feasible contract, $\pi_i = \pi_j$ if and only if $q_i = q_j$, $\forall i,j \in \{1,\dots,K\}$.
\end{lemma}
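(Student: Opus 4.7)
The plan is to derive this lemma as an almost immediate consequence of the IC constraints between types $i$ and $j$ (and, essentially, as the boundary case of Lemma~\ref{lemma:pi2q}). I would prove the two directions by a short direct argument in each case, using only the IC inequalities in~(\ref{eq:IC}) specialized to the pair $(i,j)$:
\begin{equation*}
\pi_i - \tfrac{q_i^2}{\theta_i} \;\ge\; \pi_j - \tfrac{q_j^2}{\theta_i},
\qquad
\pi_j - \tfrac{q_j^2}{\theta_j} \;\ge\; \pi_i - \tfrac{q_i^2}{\theta_j}.
\end{equation*}

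For the direction $\pi_i = \pi_j \Rightarrow q_i = q_j$, I would substitute $\pi_i = \pi_j$ into both IC inequalities above. The reward terms cancel, yielding $q_j^2 \ge q_i^2$ from the first inequality (after multiplying by $\theta_i>0$) and $q_i^2 \ge q_j^2$ from the second. Since $q_k \ge 0$ for all $k$, these force $q_i = q_j$. For the reverse direction $q_i = q_j \Rightarrow \pi_i = \pi_j$, I would substitute $q_i = q_j$ into the same two IC inequalities; the cost terms cancel and I immediately obtain $\pi_i \ge \pi_j$ and $\pi_j \ge \pi_i$, hence $\pi_i = \pi_j$.

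An equivalent and even shorter route is by contrapositive of Lemma~\ref{lemma:pi2q}: if $q_i \neq q_j$, WLOG $q_i > q_j$, then Lemma~\ref{lemma:pi2q} gives $\pi_i > \pi_j$, contradicting $\pi_i = \pi_j$; the other direction is symmetric. I would probably present the direct IC argument because it makes the role of the type ordering $\theta_k$ transparent and keeps the lemma self-contained.

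There is no real obstacle here; the only thing to be careful about is ensuring that the manipulation of the quadratic cost terms preserves the correct inequality direction, which it does because $\theta_i, \theta_j > 0$ and $q_i, q_j \ge 0$. The lemma together with Lemma~\ref{lemma:pi2q} will subsequently be used to conclude a strict monotonicity relationship of the optimal contract items with respect to type, which is the key structural property needed to reduce the $K$ IR plus $K(K-1)$ IC constraints in~(\ref{eq:CntFrm}) to a tractable set.
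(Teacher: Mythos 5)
Your direct IC argument is correct and is exactly the ``similar procedure'' the paper invokes: the paper omits this proof, stating only that it follows by the same pairwise-IC manipulation used for Lemma~\ref{lemma:pi2q}, which is what you carry out. Both directions check out (the cancellations and inequality directions are handled correctly since $\theta_i,\theta_j>0$ and $q_i,q_j\ge 0$), so no issues.
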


Lemma \ref{lemma:pi2q2} can be proved by using similar procedures as Lemma \ref{lemma:pi2q}, which is omitted for brevity. Lemma \ref{lemma:pi2q2} indicates that the EAPs providing the same received power will get the same amount of reward.

\begin{lemma}\label{lemma:theta2pi}
For any feasible contract, if $\theta_i > \theta_j$, then $\pi_i > \pi_j$, $\forall i,j \in \{1,\dots,K\}$.
\end{lemma}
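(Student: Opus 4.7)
The plan is to couple the IC constraints for types $i$ and $j$ in the standard way, turning the type ordering $\theta_i>\theta_j$ into an ordering on $q$, and then invoking Lemmas \ref{lemma:pi2q} and \ref{lemma:pi2q2} to translate that into the desired ordering on $\pi$.

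First I would write down the two pertinent IC inequalities from Definition of IC. Type $i$ prefers $(q_i,\pi_i)$ to $(q_j,\pi_j)$, which after clearing the cost terms yields
\begin{equation*}
\pi_i - \pi_j \;\ge\; \frac{q_i^2 - q_j^2}{\theta_i}.
\end{equation*}
Type $j$ prefers $(q_j,\pi_j)$ to $(q_i,\pi_i)$, which symmetrically rearranges to
\begin{equation*}
\pi_i - \pi_j \;\le\; \frac{q_i^2 - q_j^2}{\theta_j}.
\end{equation*}
Sandwiching these two bounds gives
\begin{equation*}
(q_i^2 - q_j^2)\!\left(\frac{1}{\theta_j} - \frac{1}{\theta_i}\right) \;\ge\; 0.
\end{equation*}
Since $\theta_i>\theta_j>0$, the parenthesised factor is strictly positive, so $q_i^2 \ge q_j^2$, i.e.\ $q_i\ge q_j$.

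Next I would combine this monotonicity with the earlier lemmas. If $q_i>q_j$ strictly, Lemma \ref{lemma:pi2q} immediately gives $\pi_i>\pi_j$, which is the conclusion. The only remaining case is $q_i=q_j$, which by Lemma \ref{lemma:pi2q2} forces $\pi_i=\pi_j$; I would handle this by a short contradiction argument, observing that identical $(q,\pi)$ pairs for genuinely different types reduce to a single contract item and hence can be ruled out in a feasible contract that is meant to screen across the strictly ordered types $\theta_1<\dots<\theta_K$ (alternatively, plug $q_i=q_j$ back into the sandwich above, so the lower bound in IC$_i$ becomes a strict inequality using $\theta_i>\theta_j$, contradicting $\pi_i=\pi_j$).

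The main obstacle is precisely this last step: the IC/IR constraints alone furnish only the weak monotonicity $q_i\ge q_j$, and thus only $\pi_i\ge \pi_j$ in general; upgrading to the strict inequality claimed in the lemma requires either appealing to the non-bunching convention implicit in the contract formulation or squeezing strictness out of $\theta_i>\theta_j$ via a careful re-examination of the sandwich inequalities. Everything else is a routine manipulation of the quadratic cost structure $q_k^2/\theta_k$ together with the previously established Lemmas \ref{lemma:pi2q} and \ref{lemma:pi2q2}.
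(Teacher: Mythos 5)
Your argument is sound and uses exactly the same two IC inequalities as the paper, but routes through a different intermediate step. The paper proves the lemma by contradiction: assuming $\pi_i<\pi_j$ with $\theta_i>\theta_j$ gives $(\pi_i-\pi_j)(\theta_i-\theta_j)<0$, while adding the two IC constraints yields $\theta_i(\pi_i-\pi_j)\ge q_i^2-q_j^2\ge\theta_j(\pi_i-\pi_j)$, hence $(\pi_i-\pi_j)(\theta_i-\theta_j)\ge0$, a contradiction. You instead first extract $q_i\ge q_j$ from the sandwich and then invoke Lemmas \ref{lemma:pi2q} and \ref{lemma:pi2q2}; this buys you the $q$-monotonicity as a by-product (which the paper only obtains later, informally, by combining Lemmas 1--3), at the cost of depending on the earlier lemmas rather than being self-contained. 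One correction to your closing discussion: the parenthetical suggestion to ``squeeze strictness out of the sandwich'' when $q_i=q_j$ cannot work, since in that case both bounds collapse to $0\le\pi_i-\pi_j\le0$ and force $\pi_i=\pi_j$ exactly; there is no strict inequality hiding there. You are right, however, that IR and IC alone only yield $\pi_i\ge\pi_j$, and that the strict inequality in the statement requires excluding bunching ($q_i=q_j$ for distinct types). The paper's own proof silently makes the same jump from $\ge$ to $>$, so your explicit flagging of this gap is a genuine improvement in rigor, not a defect of your argument.
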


\begin{proof}
See Appendix \ref{appendix:theta2pi}.
\end{proof}

Lemma 3 shows that a higher type EAP should be given more reward. Together with Lemma 1 and Lemma 2, it can be duduced that a higher type EAP also contributes more energy to the information source. We define this feature as monotonicity.

\begin{definition}[\textbf{Monotonicity}]
If $\theta_i \ge \theta_j, \forall i,j \in \{1,\dots,K\}$ and then $\pi_i \ge \pi_j$.
\end{definition}

Based on the above analysis, we can now use the IC condition to reduce the IR constraints and have the following lemma.

\begin{lemma}\label{lemma:ReduceIR}
With the IC condition, the IR constraints can be reduced as
\begin{equation}\label{eq:ReIR}
  \pi_1 - \frac{q_1^2}{\theta_1} \ge 0.
\end{equation}
\end{lemma}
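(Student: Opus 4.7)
The plan is to use the classical contract-theoretic reduction: the IR constraint of the lowest type is the binding one, and all other IR constraints follow automatically from this one combined with the IC constraints. The direction to argue is that IR for types $2,\dots,K$ can be deduced from the assumed IR of type $1$, so the remaining $K-1$ IR constraints are redundant.

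Concretely, I would proceed as follows. Fix an arbitrary type $k \in \{2,\dots,K\}$ and apply the IC constraint with $j = 1$, which yields
\begin{equation*}
\pi_k - \frac{q_k^2}{\theta_k} \;\ge\; \pi_1 - \frac{q_1^2}{\theta_k}.
\end{equation*}
Since the types are ordered as $\theta_1 < \theta_2 < \dots < \theta_K$, we have $\theta_k > \theta_1 > 0$, hence $q_1^2/\theta_k \le q_1^2/\theta_1$ (with equality precisely when $q_1 = 0$). Substituting this into the right-hand side gives
\begin{equation*}
\pi_k - \frac{q_k^2}{\theta_k} \;\ge\; \pi_1 - \frac{q_1^2}{\theta_k} \;\ge\; \pi_1 - \frac{q_1^2}{\theta_1} \;\ge\; 0,
\end{equation*}
where the last inequality is exactly the reduced IR condition \eqref{eq:ReIR}. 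Thus IR for type $k$ holds for every $k \ge 2$.

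For the converse, it suffices to note that keeping the type-$1$ IR as an explicit constraint is clearly necessary: dropping it would allow contracts in which the lowest-type EAP earns a strictly negative payoff, violating participation. Combined with the above argument, this shows that under IC, the full set of $K$ IR constraints collapses to the single inequality $\pi_1 - q_1^2/\theta_1 \ge 0$.

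There is essentially no serious obstacle in this proof; the only subtlety worth flagging is the monotone ordering $\theta_1 < \theta_k$, which makes the cost term $q_1^2/\theta_k$ strictly smaller for higher types and is the reason why the type-$1$ IR dominates all others. Everything else is a direct substitution, so the argument can be written in a few lines.
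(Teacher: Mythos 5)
Your proof is correct and follows essentially the same route as the paper's: apply the IC constraint of type $k$ against the item of type $1$, then use the ordering $\theta_1 < \theta_k$ to bound $\pi_1 - q_1^2/\theta_k$ below by $\pi_1 - q_1^2/\theta_1 \ge 0$, so that the type-$1$ IR constraint dominates all the others. The extra remark about the converse (that the type-$1$ IR cannot itself be dropped) is a harmless addition the paper leaves implicit.
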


\begin{proof}
See Appendix \ref{appendix:lemma:ReduceIR}.
\end{proof}

We can also reduce the IC constraints and attain the following lemma.

\begin{lemma}\label{lemma:ReduceIC}
With monotonicity, the IC condition can be reduced as the
local downward incentive compatibility (LDIC), given by
\begin{equation}\label{eq:LDIC}
\pi_i - \frac{q_i^2}{\theta_i} \ge \pi_{i-1} - \frac{q_{i-1}^2}{\theta_{i}}, \forall i \in \{2,\dots,K\},
\end{equation}
and the local upward incentive compatibility (LUIC), given by
\begin{equation}\label{eq:LUIC}
\pi_i - \frac{q_i^2}{\theta_i} \ge \pi_{i+1} - \frac{q_{i+1}^2}{\theta_{i}}, \forall i \in \{1,\dots,K-1\},
\end{equation}
\end{lemma}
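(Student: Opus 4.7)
The plan is to show the two-way implication: (1) full IC implies LDIC and LUIC trivially (they are just the special cases $j=i-1$ and $j=i+1$), so the substantive direction is (2) LDIC together with LUIC and monotonicity imply every IC constraint $\pi_i - q_i^2/\theta_i \ge \pi_j - q_j^2/\theta_i$ for arbitrary $i \neq j$. I would split the second direction into a \emph{downward} chain (handling $j < i$, using LDIC) and an \emph{upward} chain (handling $j > i$, using LUIC), each proved by induction on $|i - j|$.

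For the downward direction, fix $i$ and take any $j < i$. The goal is to chain LDIC inequalities
\begin{equation*}
\pi_\ell - \frac{q_\ell^2}{\theta_\ell} \ge \pi_{\ell-1} - \frac{q_{\ell-1}^2}{\theta_\ell}, \qquad \ell = j+1,\dots,i,
\end{equation*}
into a single inequality $\pi_i - q_i^2/\theta_i \ge \pi_j - q_j^2/\theta_i$. The crux is that each LDIC step is stated with $\theta_\ell$ in the denominator, whereas the target inequality uses $\theta_i$. The bridging fact is: if $\theta_{\ell-1} < \theta_\ell \le \theta_i$ and $q_{\ell-1} \le q_\ell$ (monotonicity), then from LDIC at step $\ell$ one has $\pi_\ell - \pi_{\ell-1} \ge (q_\ell^2 - q_{\ell-1}^2)/\theta_\ell \ge (q_\ell^2 - q_{\ell-1}^2)/\theta_i$, because the right-hand side is nonnegative and we are dividing it by a larger denominator. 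Rearranging gives $\pi_\ell - q_\ell^2/\theta_i \ge \pi_{\ell-1} - q_{\ell-1}^2/\theta_i$, and telescoping from $\ell = j+1$ up to $\ell = i$ yields the desired IC. The base case $\ell=i$ is literally LDIC.

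For the upward direction, fix $i$ and take $j > i$. Now I want to use LUIC
\begin{equation*}
\pi_\ell - \frac{q_\ell^2}{\theta_\ell} \ge \pi_{\ell+1} - \frac{q_{\ell+1}^2}{\theta_\ell}, \qquad \ell = i, i+1,\dots, j-1,
\end{equation*}
and glue them so that the denominator $\theta_i$ appears everywhere. The analogous bridging step is: from LUIC at $\ell$, $\pi_{\ell+1} - \pi_\ell \le (q_{\ell+1}^2 - q_\ell^2)/\theta_\ell$, and since $q_{\ell+1} \ge q_\ell$ and $\theta_\ell \ge \theta_i$, we get $\pi_{\ell+1} - \pi_\ell \le (q_{\ell+1}^2 - q_\ell^2)/\theta_i$, i.e.\ $\pi_\ell - q_\ell^2/\theta_i \ge \pi_{\ell+1} - q_{\ell+1}^2/\theta_i$. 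Telescoping from $\ell = i$ up to $\ell = j-1$ gives $\pi_i - q_i^2/\theta_i \ge \pi_j - q_j^2/\theta_i$, which is the remaining half of IC.

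The main obstacle is purely bookkeeping: keeping track of which denominator appears at each step of the chain and justifying each ``denominator swap'' with the correct two ingredients (monotonicity of $q_k$ in $k$, and monotonicity of $\theta_k$ in $k$, so that $q_{\ell+1}^2 - q_\ell^2 \ge 0$ and division by a larger $\theta$ weakens the inequality in the right direction). Once the monotonicity $q_1 \le q_2 \le \dots \le q_K$ is in hand—which follows from Lemmas \ref{lemma:pi2q}--\ref{lemma:theta2pi}—both telescoping arguments go through mechanically, and combining them with the trivial converse establishes the lemma.
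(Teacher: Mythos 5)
Your proof is correct and follows essentially the same route as the paper's: split the IC constraints into downward and upward families, then telescope the local constraints after swapping each step's denominator to the common $\theta_i$, with the swap justified by monotonicity. The only cosmetic difference is that you justify the swap via $q$-monotonicity ($q_\ell^2 - q_{\ell-1}^2 \ge 0$ divided by a larger $\theta$) while the paper uses $\pi$-monotonicity ($(\theta_{i+1}-\theta_i)(\pi_i-\pi_{i-1}) \ge 0$) combined with the LDIC bound on $q_i^2 - q_{i-1}^2$; these are interchangeable in view of Lemma~\ref{lemma:pi2q}.
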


\begin{proof}
See Appendix \ref{appendix:lemma:ReduceIC}.
\end{proof}

By using the reduced IR and IC constraints, the optimization problem (\ref{eq:CntFrm}) can be transformed as
\begin{equation}\label{eq:CntFrmRe}
\begin{aligned}
  \mathbf{(P4.5):} \qquad & \max\limits_{(\pmb \pi,\pmb q)}  \mathbb{E}\{U_{DAP}^C(\pmb \pi,\pmb q)\}  \\
  s.t. \quad & \pi_1 - \frac{q_1^2}{2 \theta_1} \ge 0, \\
  &\pi_i - \frac{q_i^2}{\theta_i} \ge \pi_{i-1} - \frac{q_{i-1}^2}{\theta_i}, \forall i \in \{2,\dots,K\},\\
  &\pi_i - \frac{q_i^2}{\theta_i} \ge \pi_{i+1} - \frac{q_{i+1}^2}{\theta_i}, \forall i \in \{1,\dots,K-1\},\\
  &\pi_K \ge \pi_{K-1} \ge \dots \ge \pi_1, \\
  &q_k \ge 0, \pi_k \ge 0, \theta_k \ge 0, \forall k \in \{1,\dots,K\}.
\end{aligned}
\end{equation}

The LDIC and the LUIC in (\ref{eq:CntFrmRe}) can be combined as shown in Lemma 8.

\begin{lemma}\label{lemma:final}
Since the optimization objective function is an increasing function of $q_k$ and a decreasing function of $\pi_k,\forall k \in \{1,\dots,K\}$, the above optimal problem can be further simplified as
\begin{equation}\label{eq:CntFrmRe2}
\begin{aligned}
  \mathbf{(P4.6):} \qquad &\max\limits_{(\pmb \pi,\pmb q)}  \mathbb{E}\{U_{DAP}^C(\pmb \pi,\pmb q)\}  \\
  s.t. \quad & \pi_1 - \frac{q_1^2}{\theta_1} = 0, \\
  &\pi_k - \frac{q_k^2}{\theta_k} = \pi_{k-1} - \frac{q_{k-1}^2}{\theta_k}, \forall k \in \{2,\dots,K\},\\
  &\pi_K \ge \pi_{K-1} \ge \dots \ge \pi_1, \\
  &q_k \ge 0, \pi_k \ge 0, \theta_k \ge 0, \forall k \in \{1,\dots,K\}.
\end{aligned}
\end{equation}
\end{lemma}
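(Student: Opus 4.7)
The plan is to prove (P4.5) and (P4.6) share the same optimal solution by showing (i) the IR constraint $\pi_1 - q_1^2/\theta_1 \ge 0$ must hold with equality at any optimum, (ii) each LDIC inequality must bind as well, and (iii) once LDIC holds with equality and monotonicity is imposed, the LUIC constraints are automatically satisfied and can therefore be dropped. Throughout, I would exploit the structural fact highlighted in the statement: $\mathbb{E}\{U_{DAP}^C(\pmb\pi,\pmb q)\}$ is strictly decreasing in every $\pi_k$, so any slack in a constraint that permits uniformly shrinking the rewards is immediately exploited by the objective.

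For step (i), I would argue by contradiction. Suppose at an optimum we had $\pi_1 - q_1^2/\theta_1 > 0$. Consider the perturbation $\pi_k \to \pi_k - \epsilon$ for every $k$ with a small $\epsilon>0$. All IC constraints (\ref{eq:LDIC}) and (\ref{eq:LUIC}) only involve differences $\pi_i - \pi_j$, so they are unaffected; the monotonicity ordering on $\pmb\pi$ is preserved; and the IR inequality for type $1$ still holds for sufficiently small $\epsilon$. Yet the perturbation strictly increases $\mathbb{E}\{U_{DAP}^C\}$ since the $-\sum n_k \pi_k$ term is reduced. This contradicts optimality, so the IR must bind.

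For step (ii), I would use the analogous perturbation, but applied only to a ``tail'' of rewards. If some LDIC constraint were slack, say $\pi_{k^\star} - q_{k^\star}^2/\theta_{k^\star} > \pi_{k^\star-1} - q_{k^\star-1}^2/\theta_{k^\star}$, I would decrease $\pi_{k^\star},\pi_{k^\star+1},\dots,\pi_K$ by a common small $\epsilon$. This keeps all LDIC inequalities with both indices in $\{k^\star,\dots,K\}$ or both in $\{1,\dots,k^\star-1\}$ unchanged; the particular LDIC at index $k^\star$ still holds by the slack assumption; the LUIC at index $k^\star-1$ becomes easier (its right-hand side drops); monotonicity is preserved; and by step (i) the IR is unaffected. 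The objective strictly improves, a contradiction. Hence every LDIC holds with equality, which is the equality constraint appearing in (P4.6).

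Finally, for step (iii), given that LDIC binds, I would subtract to obtain $\pi_k - \pi_{k-1} = (q_k^2 - q_{k-1}^2)/\theta_k$. The LUIC at index $k-1$ is equivalent to $\pi_k - \pi_{k-1} \le (q_k^2 - q_{k-1}^2)/\theta_{k-1}$. Since monotonicity gives $q_k \ge q_{k-1}$ and $\theta_{k-1} < \theta_k$, comparing $(q_k^2-q_{k-1}^2)/\theta_k$ to $(q_k^2-q_{k-1}^2)/\theta_{k-1}$ immediately yields the desired inequality, so LUIC is automatic and may be removed. Combined with the IR equality and the LDIC equalities, this delivers exactly (P4.6). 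The main obstacle I anticipate is bookkeeping in step (ii): I have to verify that shrinking a tail of rewards by a uniform $\epsilon$ genuinely respects \emph{every} previously satisfied IC constraint (in particular, the LUIC linking the ``shifted'' and ``unshifted'' blocks), which is why the analysis must split the constraints by whether both indices lie on the same side of $k^\star$ and whether the shift tightens or loosens the given inequality.
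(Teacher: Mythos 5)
Your proposal is correct and follows essentially the same route as the paper: bind the type-1 IR constraint and each LDIC by exploiting that the objective is strictly decreasing in every $\pi_k$, then show that the binding LDIC together with monotonicity renders the LUIC redundant. Your explicit perturbation arguments (a uniform shift of all rewards for IR, a tail shift $\pi_{k^\star},\dots,\pi_K$ for LDIC) are a more careful formalization of the paper's informal ``lower the rewards until equality'' reasoning; the one assertion you leave implicit --- that the tail shift preserves $\pi_{k^\star}\ge\pi_{k^\star-1}$ --- does follow from the assumed slack in the LDIC at $k^\star$ together with $q_{k^\star}\ge q_{k^\star-1}$, so there is no genuine gap.
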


\begin{proof}
See Appendix \ref{appendix:lemma:final}.
\end{proof}

\subsubsection{Solution to Optimal Contract}
We now solve the optimization problem (\ref{eq:CntFrmRe2}) to attain the optimal contract in the subsequent way: a standard method is first applied to resolve the relaxed problem without monotonicity and the solution is then verified to satisfy the condition of the monotonicity. By iterating the first and second constraints in (\ref{eq:CntFrmRe2}), we have
\begin{equation}\label{eq:iter}
\begin{aligned}
\pi_k &= \frac{q_1^2}{\theta_1} + \sum\limits_{n=2}^k \frac{q_n^2 - q_{n-1}^2}{\theta_n}\\
&=\frac{1}{\theta_k}q_{k}^2 + \sum\limits_{n=2}^k \left(\frac{1}{\theta_{n-1}} - \frac{1}{\theta_{n}} \right)q_{n-1}^2,
\end{aligned}
\end{equation}
where $\forall k \in \{2,\dots,K\}$.
Substitute (\ref{eq:iter}) into $\mathbb{E}\{U_{DAP}^C(\pmb \pi,\pmb q)\}$, and all $\pi_k, \forall k \in \{1,\dots,K\}$ are removed from the optimization problem (\ref{eq:CntFrmRe2}), which becomes
\begin{equation}\label{eq:finalprob}
\begin{aligned}
  &\max\limits_{\pmb q} \sum\limits_{n_1=0}^{N}\sum\limits_{n_2=0}^{N-n_1}\dots \sum\limits_{n_{K-1}=0}^{N-\sum\nolimits_{i=0}^{K-2}n_i} \Phi_{n_1,\dots,n_K}\\
  &\times \left[ W \log_2 \left( 1 + \gamma \sum\limits_{k=1}^{K} n_k q_k \right) \right.\\
  &- \left. \sum\limits_{k=1}^{K-1}
\left( \frac{1}{\theta_{k}} \sum\limits_{i=k}^{K}n_i
- \frac{1}{\theta_{k+1}} \sum\limits_{i=k+1}^{K}n_i \right)q_k^2
-\frac{n_K}{\theta_K}q_K^2 \right],\\
s.t. &~~~~q_k \ge 0, \forall k \in \{1,\dots,K\}.
\end{aligned}
\end{equation}
Note that (\ref{eq:finalprob}) is composed of logarithmic functions and quadratic functions, both of which are concave functions. And the positive summation of all these concave function is still a concave function. Besides, the constraint set is a convex set. So we can leverage standard convex optimization tools in \cite{Boyd2004Convex} to solve it to get $q_k$, and then $\pi_k$ can be calculated by (\ref{eq:iter}). Moreover, monotonicity is met automatically when the type is uniformly distributed\cite{bolton2005contract}. So far, we have derived the optimal contract $(q_k,\pi_k)$, $\forall k \in \{1,\dots,K\}$, which can maximize the utility of the DAP and satisfy the constraints of IR and IC.

\subsection{Practical Implementation}
To implement the proposed approach in a practical radio frequency energy harvesting-based IoT system. The following steps should be followed.

First, the DAP needs to collect the information it requires by the computation of the optimal contract. The active sensor will broadcast pilots to allow the DAP and EAPs to estimate the channels such that the DAP is aware of the channel gain from the DAP to the sensor. From historical data, the DAP can obtain empirical values of the energy harvesting efficiency factor and noise power, and thus it can attain the value of parameter $\gamma$. With the known values of other public system parameters including the channel bandwidth, the user number, and the set of EAP types, the DAP can calculate the optimal contract.

Next, the DAP will broadcast the optimal contract to the candidate EAPs via the corresponding backhauls. By evaluating the contract, the EAPs will decide whether to participate in the cooperation. If it decides to participate in the current energy trading, it will send a feedback to the DAP. After the DAP received the feedback, it will sign a contract with the EAP.

Finally, after the contracts are signed, the EAPs will perform the contracts by establishing an energy transfer link towards the active sensor and charge it according to the agreed transmit power. When the DAP detects that the EAPs have fulfilled its contractual obligation, the DAP will pay the EAP with agreed amount of rewards via the backhaul connecting the operators.

\section{Benchmark Schemes with Complete Information}
To investigate the impacts from information scenarios and compare the proposed schemes with the existing schemes under complete information, we first extend existing Stackelberg game from unified pricing strategy into discriminative pricing strategy. And then we present the centralized optimization scheme under complete information as the reference for the proposed incentive mechanisms.

\subsection{Stackelberg Game Formulation}
To fully exploit the potentials of EAPs with distinct channel conditions and energy costs, a discriminative pricing strategy is considered, i.e., the DAP can impose different prices of per unit energy harvested from different EAPs. The utility function of the DAP can be rewritten as
\begin{equation}
U_{DAP}^S(\pmb \lambda,\pmb q)=R_{SD} -  \sum\limits_{m=1}^{N} \lambda_m q_m.
\end{equation}
where $\pmb q = [q_1,q_2,\dots,q_N]^T$ is the vector of the active sensor's received power from EAPs, with $q_m$ denoting the received power from the $m$th EAP, $\pmb \lambda = [\lambda_1,\lambda_2,\dots,\lambda_N]^T$ is the vector of prices per unit energy harvested from EAPs, with $\lambda_m \ge 0$ denoting the price per unit energy harvested from $m$ EAP, and $R_{SD}$ is the achievable throughput defined in (\ref{eq:Ras}) and (\ref{eq:Ras2}). The optimization problem for the DAP or the leader-level game can be formulated as
\begin{equation}\label{eq:opt_DAP}
\begin{aligned}
  \mathbf{(P5.1):} \quad & \max\limits_{\pmb \lambda} U_{DAP}^S(\pmb \lambda,\pmb q) \\
  s.t. \quad & \pmb \lambda \ge \pmb 0
\end{aligned}
\end{equation}
Note that the optimization problem (P5.1) is different from (P4.1) under asymmetric information, the instantaneous utility of the DAP is optimized here, instead of expected utility of the DAP in asymmetric information.

Each EAP is modeled as a follower which would like to maximize its individual profit, the utility of which is rewritten as
\begin{equation}\label{eq:USEAP}
U_m^S(\lambda_m,q_m) = \lambda_m q_m - \frac{a_m}{G_{m,s}^2}q_m^2,
\end{equation}
where $a_m > 0$ and $G_{m,s}$ are the energy cost coefficient and channel gain of the $m$th EAP, respectively. Thus, the optimization problem for the EAP $m$ or the follower-level game is given by
\begin{equation}\label{eq:opt_EAP}
\begin{aligned}
  \mathbf{(P5.2):} \quad & \max\limits_{q_m} U_m^S(\lambda_m,q_m) \\
  s.t. \quad & q_m \ge 0 \quad\quad\quad\quad\quad\quad
\end{aligned}
\end{equation}

\subsection{Analysis of the Formulated Stackelberg Game}
In this subsection, we will derive the SE of the formulated game by analyzing the optimal strategies for the DAP and EAPs to maximize their own utility functions. A closed-form solution is derived by using Karush-Kuhn-Tucker (KKT) conditions.

First, the optimal $q_m^*$ of the $m$th EAP is similar to that in (\ref{eq:EAP_optimal_solution_asym}), which is given by the following Lemma:
\begin{lemma}\label{lemma:EAP_optimal_solution}
For given $\lambda_m$, the optimal solution for problem (P5.2) is given by
\begin{align}\label{eq:EAP_optimal_solution}
{q_m^{{*}}} = \frac{G_{m,s}^2 \lambda_m}{2 a_m}.
\end{align}
\end{lemma}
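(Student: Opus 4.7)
The plan is to observe that problem (P5.2), for fixed $\lambda_m$, reduces to maximizing a one-dimensional quadratic in $q_m$ over the half-line $q_m \ge 0$, and then to locate the unique critical point and check that it is feasible. No Lagrangian machinery is required because the objective is strictly concave.

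First I would note that in $U_m^S(\lambda_m,q_m) = \lambda_m q_m - \frac{a_m}{G_{m,s}^2}q_m^2$ the coefficient of $q_m^2$ is $-a_m/G_{m,s}^2 < 0$ (since $a_m>0$ and $G_{m,s}>0$), so $U_m^S$ is strictly concave in $q_m$ and admits a unique unconstrained maximizer. Next I would compute the first-order condition
\begin{equation*}
\frac{\partial U_m^S}{\partial q_m} = \lambda_m - \frac{2 a_m}{G_{m,s}^2}\, q_m = 0,
\end{equation*}
which solves to $q_m^{*} = \frac{G_{m,s}^2 \lambda_m}{2 a_m}$. Since $\lambda_m \ge 0$ by the leader's constraint in (P5.1), the critical point satisfies $q_m^{*} \ge 0$, so the inequality constraint is inactive and the unconstrained optimum is also the constrained optimum. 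A one-line second-derivative check, $\partial^2 U_m^S/\partial q_m^2 = -2 a_m/G_{m,s}^2 < 0$, confirms that this stationary point is indeed a maximizer.

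There is essentially no hard step here; the only thing to be careful about is ruling out a boundary solution at $q_m = 0$, and this is immediate because the critical point is already nonnegative and the objective is strictly concave, so the constraint $q_m \ge 0$ never binds at optimum (except in the degenerate case $\lambda_m = 0$, where the formula still returns $q_m^{*} = 0$ consistently). The result can equivalently be read off from Lemma~\ref{lemma:EAP_optimal_solution_asym} by the substitution $\theta_k = G_{m,s}^2/a_m$, which is the analogous identity used in the asymmetric-information derivation.
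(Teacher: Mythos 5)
Your proposal is correct and follows the same route as the paper: the paper's proof simply observes that the objective of (P5.2) is concave (quadratic) in $q_m$, and your derivation via the first-order condition, the feasibility check $q_m^*\ge 0$ from $\lambda_m\ge 0$, and the second-derivative verification just spells out the details the paper leaves implicit. No gap; nothing further needed.
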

\begin{proof}
The proof of this lemma follows by noting that the objective function of problem (P5.2) given in (\ref{eq:opt_EAP}) is a concave function in terms of $q_m$.
\end{proof}
It can be observed from Lemma \ref{lemma:EAP_optimal_solution} that for the same energy price, an EAP with better channel gain and/or less energy cost would like to contribute more power to the active sensor.

Subsequently, we need to solve problem (P5.1) by replacing $q_m$ with $q_m^*$ given in (\ref{eq:EAP_optimal_solution}). The optimization problem at the DAP side can be expressed as
\begin{equation}\label{eq:opt_DAP_uniform}
\begin{aligned}
  \mathbf{(P5.3):} \quad & \max\limits_{\pmb \lambda} U_{DAP}^{S}(\pmb \lambda,\pmb q^*) \\
  s.t. \quad & \pmb \lambda \ge \pmb 0
\end{aligned}
\end{equation}
where $U_{DAP}^{S}(\pmb \lambda,\pmb q^*)$ is given by
\begin{equation}\label{eq:USEAP_Bar}
\begin{aligned}
U_{DAP}^{S}(\pmb \lambda,\pmb q^*) &= W \log_2 \left( 1 + \gamma \sum\limits_{m=1}^{N} \frac{G_{m,s}^2 }{2 a_m}\lambda_m \right) \\
&- \sum\limits_{m=1}^{N} \frac{G_{m,s}^2 }{2 a_m}\lambda_m^2.
\end{aligned}
\end{equation}

We can observe that problem (P5.3) is a concave function in terms of vector $\pmb \lambda$ since the former part in (\ref{eq:USEAP_Bar}) is a logarithm function (concave) and the latter parts in (\ref{eq:USEAP_Bar}) are the summation of quadratic functions (concave), and the constraint is affine. So problem (P5.3) is a convex optimization problem. By using KKT conditions to solve problem (P5.3), the closed-form solution for $\pmb \lambda$ is derived in the following proposition.
\begin{proposition}\label{proposition:KKT_solution}
The optimal solution to problem (P5.3) is given by
\begin{equation}\label{eq:KKT_sol}
\lambda_1^* = \lambda_2^* = \dots = \lambda_N^*=\frac{\sqrt{\log_2(e)\gamma^2 W \Theta+1}-1}{\gamma \Theta},
\end{equation}
where $e$ is the base of the natural logarithm, $\Theta$ is given by
\begin{equation}\label{eq:Theta}
\Theta = \sum\limits_{m=1}^{N} \frac{G_{m,s}^2 }{a_m}.
\end{equation}
\end{proposition}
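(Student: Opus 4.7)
The plan is to solve (P5.3) via KKT conditions, exploiting the convexity of the program that was already verified just before the proposition. The central observation is that the stationarity condition for each $\lambda_m$ produces a relation whose dependence on $m$ cancels out, forcing a common optimizer across all coordinates and reducing the problem to a single scalar quadratic.

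Concretely, I would form the Lagrangian $L(\pmb\lambda,\pmb\mu) = U_{DAP}^{S}(\pmb\lambda,\pmb q^*) + \sum_{m=1}^{N}\mu_m \lambda_m$ with $\mu_m \ge 0$. Writing $b_m = G_{m,s}^2/(2a_m)$ and $S = \sum_j b_j \lambda_j$ for brevity, stationarity in $\lambda_m$ reads $\frac{W\gamma \log_2(e)\, b_m}{1 + \gamma S} - 2 b_m \lambda_m + \mu_m = 0$. Before using this I would verify that the optimum is interior: the partial derivative at $\lambda_m = 0$ equals $W\gamma \log_2(e)\, b_m > 0$, so nudging $\lambda_m$ away from zero strictly improves the objective. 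Hence $\lambda_m^* > 0$ and complementary slackness forces $\mu_m = 0$. Cancelling the common positive factor $b_m$ then yields $\lambda_m = \frac{W\gamma \log_2(e)/2}{1 + \gamma S}$, and since the right-hand side does not depend on $m$, all components must share a single value $\lambda^*$. This is precisely the first claim of the proposition, and it simultaneously explains \emph{a posteriori} why the discriminative pricing structure collapses to uniform pricing.

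With $\lambda_j^* = \lambda^*$ for every $j$, the aggregated sum becomes $S = \lambda^* \Theta/2$ because $\sum_j b_j = \Theta/2$, and the stationarity equation reduces to the scalar quadratic $\gamma \Theta (\lambda^*)^2 + 2\lambda^* - W\gamma \log_2(e) = 0$. The quadratic formula, retaining only the positive root (the other root is negative and hence infeasible), produces $\lambda^* = \frac{\sqrt{1 + \log_2(e)\gamma^2 W \Theta} - 1}{\gamma \Theta}$, matching the closed-form expression in the statement.

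No part of this argument is genuinely difficult; the only conceptual step is recognising that the gradient structure forces uniform pricing, and the only piece of care required is justifying the interior nature of the optimum so that the KKT multipliers can be discarded. Everything downstream of that is a routine application of the quadratic formula, and the result follows directly.
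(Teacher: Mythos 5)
Your proof is correct and follows essentially the same route as the paper's: write the KKT stationarity conditions for the convex problem (P5.3), observe that the $m$-dependence cancels so all prices coincide, and solve the resulting scalar quadratic for the positive root. The only (minor, and arguably cleaner) difference is that you justify $\mu_m=0$ by showing the optimum is interior via the strictly positive partial derivative at $\lambda_m=0$, whereas the paper simply tests the all-multipliers-zero case and appeals to convexity for sufficiency; note only the small slip that the derivative at $\lambda_m=0$ is $W\gamma\log_2(e)\,b_m/(1+\gamma S)$ rather than $W\gamma\log_2(e)\,b_m$, which is still positive and so does not affect the argument.
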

\begin{proof}
See Appendix \ref{appendix:prop_1}.
\end{proof}
Proposition \ref{proposition:KKT_solution} shows that the optimal prices for the Stackelberg game with complete information are the same. This result is consistent with that of the Stackelberg game with asymmetric information. As we explained before, this is because the received power price is used in the Stackelberg games with complete and asymmetric information. The DAP has no motivation to treat the received power from EAPs differently.

Note that the Stackelberg game under complete information can calculate an optimal price for each instantaneous channel realization and equivalently the combination of EAPs' types. As such, it can adapt to the change of channel conditions. As a comparison, the Stackelberg game under asymmetric information can only calculate a single price no matter of the change of the channel conditions, i.e., the change of the combinations of the EAPs' types.

\subsection{Centralized Optimization}
In this part, the performance of centralized optimization scheme, i.e., the optimal contract with complete information, where the DAP knows exactly the types of the EAPs, is presented. The centralized optimization problem is given as follows.
\begin{equation}\label{eq:centralized}
\begin{aligned}
   \mathbf{(P5.4):} \qquad & \max\limits_{(\pmb \pi,\pmb q)}  \mathbb{E}\{U_{DAP}^C(\pmb \pi,\pmb q)\} \\
  s.t. \quad & \pi_k - \frac{q_k^2}{\theta_k} \ge 0, \forall k \in \{1,\dots,K\}
\end{aligned}
\end{equation}
where $\mathbb{E}\{U_{DAP}^C(\pmb \pi,\pmb q)\}$ is given in (\ref{eq:Expect_U_DAP_contract}).

Since the DAP knows exactly the types of the EAPs, the optimal prices are given by
\begin{equation}\label{eq:opt_prices_central}
\pi_k^* = \frac{q_k^2}{\theta_k}, \forall k \in \{1,\dots,K\}
\end{equation}
We substitute $\pi_k$ in (\ref{eq:centralized}) with $\pi_k^*$ and get
\begin{equation}\label{eq:centralized2}
\begin{aligned}
   \mathbf{(P5.5):} \qquad & \max\limits_{\pmb q}  \mathbb{E}\{U_{DAP}^C(\pmb \pi^*,\pmb q)\}\\
   s.t. \quad & \pmb q \ge \pmb 0
\end{aligned}
\end{equation}
where $\mathbb{E}\{U_{DAP}^C(\pmb \pi^*,\pmb q)\}$ is given by
\begin{equation}\label{eq:Expect_U_DAP_contract_central}
\begin{aligned}
  &\mathbb{E}\{U_{DAP}^C(\pmb \pi^*,\pmb q)\} = \sum\limits_{n_1=0}^{N}\sum\limits_{n_2=0}^{N-n_1}\dots \sum\limits_{n_{K-1}=0}^{N-\sum\nolimits_{i=0}^{K-2}n_i}\\
  &\left\{ \Phi_{n_1,\dots,n_K} \left[ W \log_2 \left( 1 + \gamma \sum\limits_{k=1}^{K} n_k q_k \right) - \sum\limits_{k=1}^K n_k \frac{q_k^2}{\theta_k} \right] \right\}.
\end{aligned}
\end{equation}

Note that (\ref{eq:Expect_U_DAP_contract_central}) is exactly the expectation of the social welfare, which is defined in (\ref{eq:SocWel}). Although we originally optimize the utility function of the DAP in problem (P5.4), it is consistent with the optimization of the social welfare, which is similar case in the design of contract theory as we mentioned before.

It can also be observed that problem (P5.5) is a convex optimization problem. This is because each term in the summation is composed a logarithm function (concave) and quadratic functions (concave), the summation of concave functions are still a concave function, and the constraint is affine. We can get the solution of problem (P5.5) by solving the system of equations given by KKT conditions, which is omitted here as it is similar to that in Appendix \ref{appendix:Prop_2}.

\section{Simulations and Discussions}
In this section, we first evaluate the feasibility of the proposed contract, and then compare the performance of the proposed incentive mechanisms. The performance of centralized optimization scheme is also simulated as the upper bound.

\begin{table}[!t]
\renewcommand{\arraystretch}{1.3}
\caption{System Settings}
\label{system_settings}
\centering
\begin{tabular}{|c|c|}
\hline
Parameters                                      & Values        \\
\hline
Energy harvesting efficiency $\eta$             & 0.5           \\
\hline
Bandwidth $W$                                   & 1MHz          \\
\hline
Energy cost coefficient $a_m$                   & [0.1,1]        \\
\hline
$d_{m,s}$                                       & [5m,10m]        \\
\hline
$d_{a,s}$                                       & [15m,25m]       \\
\hline
Path-loss coefficient $\alpha$                  & 2             \\
\hline
Power attenuation at reference distance of 1m   & 30dB         \\
\hline
Noise power $N_0$                               & $10^{-8}mW$   \\
\hline
\end{tabular}
\end{table}

\begin{figure}[!t]
  \centering\scalebox{0.5}{\includegraphics{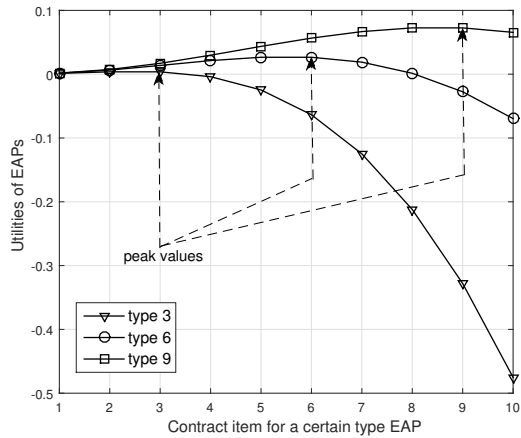}}
  \caption{Utilities of EAPs with type 3, type 6 and type 9 as functions of contract items designed for all kinds of EAPs from type 1 to type 10. We set $N=5$ and $K=10$.}\label{Fig:IC}
\end{figure}

The main system parameters are shown in Table I. Since $\theta=G_{m,s}^2/a_m$ and $\gamma=\eta G_{a,s}/N_0$, the practical ranges of $\theta$ and $\gamma$ can be determined by the parameters shown in Table I. In the simulations, K types of EAPs are first generated randomly and used as the set of EAP types. Then each of N EAPs in the market will choose one type from the set of EAP types uniformly, and thus the DAP's type $\theta$ is uniformly distributed. The unit of achievable throughput is set as Mbps.

To verify the feasibility (i.e., IR and IC) of the proposed scheme under information asymmetry, the utilities of EAPs with type 3, type 6 and type 9 are plotted in Fig. \ref{Fig:IC} as functions of all contract items $(q_k,\pi_k),k\in {1,2,\dots,K}$. We can see from Fig. \ref{Fig:IC} that each of the utility achieves its peak value only when it chooses the contract item designed for its corresponding type, which indicates that the IC constraint is satisfied. For example, for the type 6 EAP, its utility achieves the peak value only when it selects the contract item $(q_6,\pi_6)$, which is exactly designed for its type. If the type 6 EAP selects any other contract item $(q_k,\pi_k),k\in {1,2,\dots,K}$ and $k\ne 6$, its utility will reduce. Moreover, when each of above type EAPs (i.e., type 3, type 6 and type 9) chooses the contract item designed for its corresponding type, the utilities are nonnegative. Note that similar phenomenon can be observed for all other types of EAPs when they select the contract item designed for their corresponding types, which are not shown in Fig. \ref{Fig:IC} for brevity. In this sense, the IR condition is satisfied. It can be concluded that utilizing the proposed scheme, EAPs will automatically reveal its type to the DAP after selecting the contract item. This means that using the proposed scheme, the DAP can capture the EAPs' private information (i.e., its type), and thus effectively address the problem of information asymmetry.

\begin{figure}[!t]
  \centering\scalebox{0.5}{\includegraphics{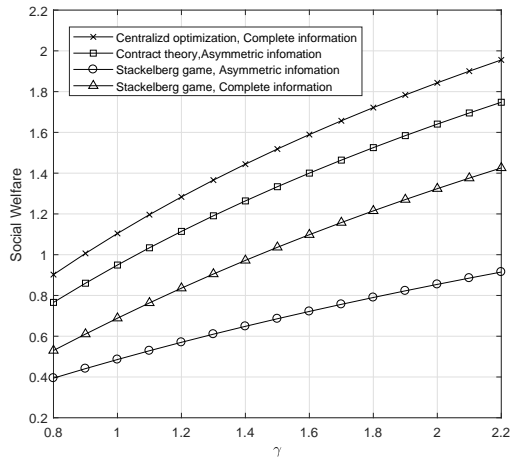}}
  \caption{Social welfare as a function of $\gamma$. We set $N=2$ and $K=5$.}\label{Fig:Social}
\end{figure}

\begin{figure}[!t]
  \centering\scalebox{0.5}{\includegraphics{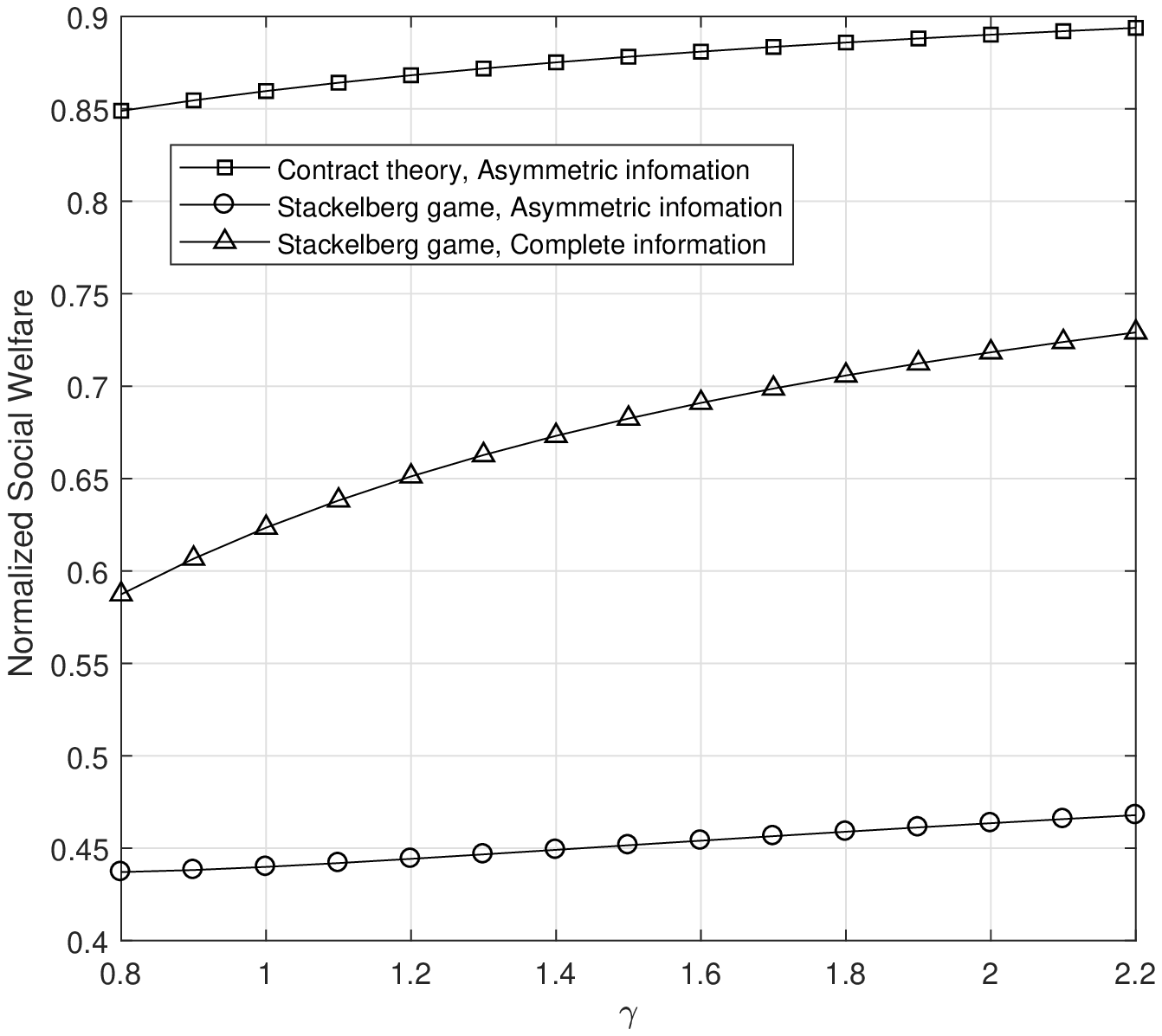}}
  \caption{Normalized social welfare as a function of $\gamma$. We set $N=2$ and $K=5$.}\label{Fig:SocialNormalized}
\end{figure}

\begin{figure}[!t]
  \centering\scalebox{0.48}{\includegraphics{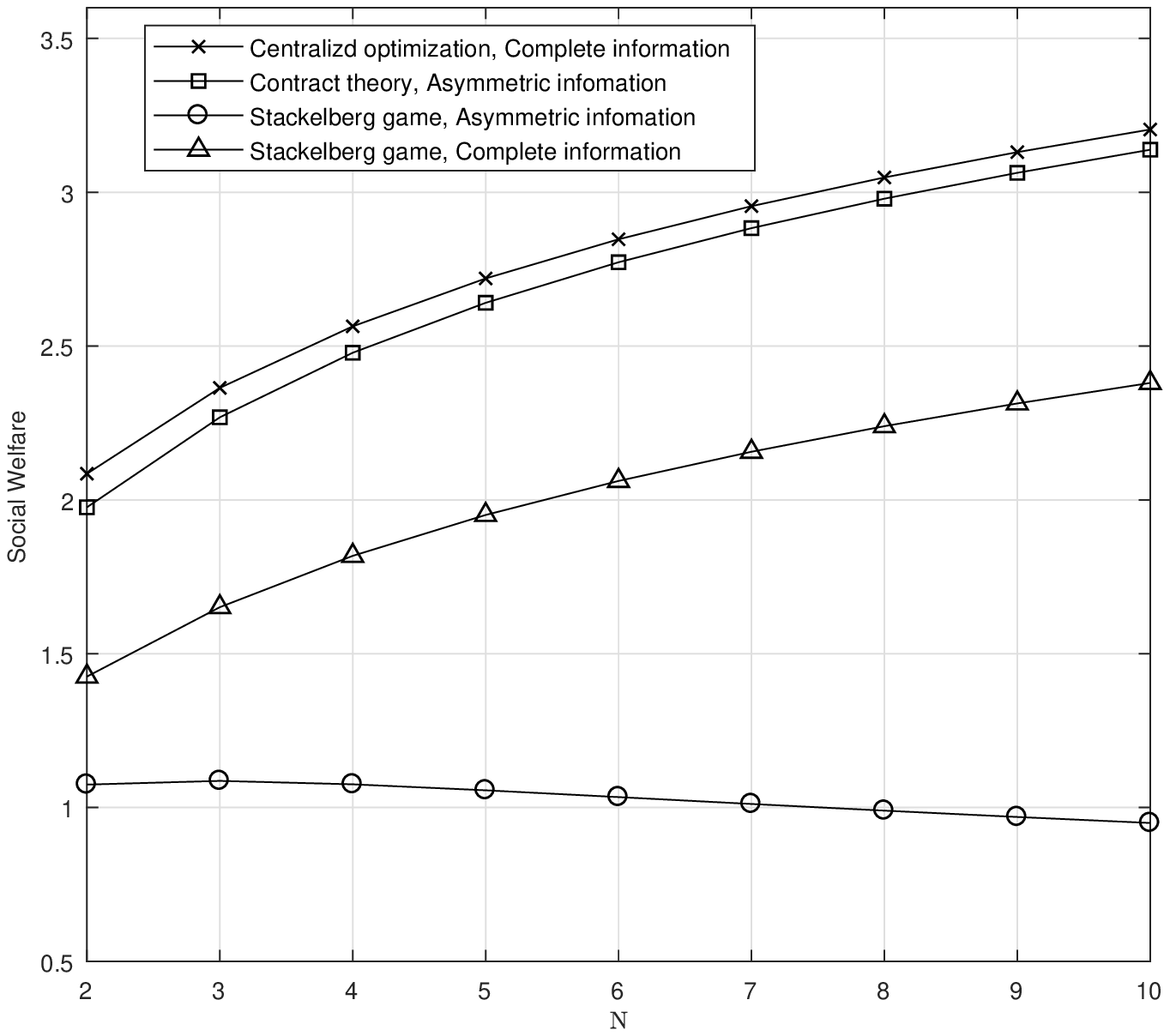}}
  \caption{Social welfare as a function of $N$. We set $K=2$, $\gamma=2.2$ and $K=2,3,\dots,10$.}\label{Fig:Social_N}
\end{figure}

\begin{figure}[!t]
  \centering\scalebox{0.5}{\includegraphics{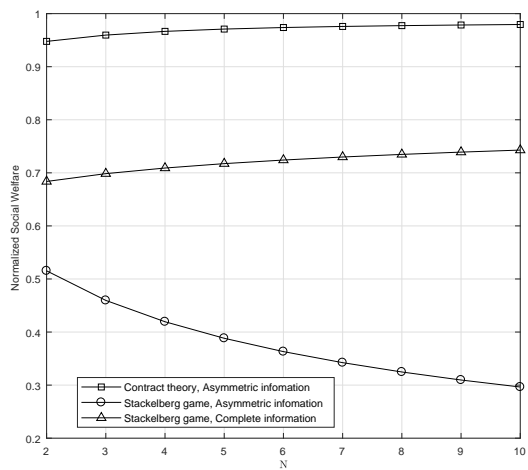}}
  \caption{Normalized social welfare as a function of $N$. We set $K=2$, $\gamma=2.2$ and $N=2,3,\dots,10$.}\label{Fig:SocialNormalized_N}
\end{figure}


To evaluate the performance of the proposed schemes, we compare the social welfare of the contract, Stackelberg games and the upper bound. Fig. \ref{Fig:Social} plots the social welfare of these schemes as a function of $\gamma$. It can be observed from Fig. \ref{Fig:Social} that the utilities achieved by all schemes increase with $\gamma$. This is because with the same $\sum\nolimits_{m=1}^{N} q_m$, the larger the value of $\gamma$, the larger the achievable throughput $R_{sa}$ (refer to (\ref{eq:Ras2})), and thus larger social welfare (refer to (\ref{eq:SocWel})). The performance of the optimal scheme with complete information providing the best performance serving as the upper bound. The performance of contract scheme is generally better than that of two Stackelberg games. This is because in contract theory, the EAPs have limited contract items to choose from and thus by using the contract theory, the DAP extracts more benefits from the EAPs and leave less surplus for the EAPs. However, in Stackelberg games, the EAPs have the freedom to optimize its individual utility function and thus can reserve more surplus. So the performance of the Stackelberg games are inferior than that of the contract scheme. We can also observe that the Stackelberg game with asymmetric information is inferior than that with complete information. This is because without complete information, the Stackelberg game fails to adapt to the change of the channel, and thus the performance becomes worse.

Fig. \ref{Fig:SocialNormalized} shows the normalized social welfare as a function of $\gamma$, where social welfare of the contract and Stackelberg games are normalized by the upper bound. It can be seen in Fig. \ref{Fig:SocialNormalized} that when $\gamma$ is small, the social welfare of contract can initially achieve more than $85\%$ of that of the centralized optimization scheme with complete information, and gradually approach to it with the increasing of $\gamma$. This demonstrates that the proposed incentive mechanism can effectively mitigate the effects of information asymmetry by leveraging contract theory. While the performance of the Stackelberg game with complete information is generally less than $75\%$ of that of the optimal scheme with complete information. Moreover, the performance of the Stackelberg game with asymmetric information is even worse, which is generally less than $50\%$ of that of the optimal scheme with complete information. The above results show that by using the monopoly position in contract theory to provide limited contract items, the contract can achieve good performance close to the optimal centralized optimization with complete information. However, in Stackelberg games, the DAP grants some freedom for the EAPs to do optimizations, which are selfish and do not care about social welfare. As such, its performance in terms of social welfare is degraded.

To explore the impact of total EAP number $N$ in the market, we plot the curves of the social welfare and the normalized social welfare of the contract, Stackelberg games and the upper bound in Fig. \ref{Fig:Social_N} and Fig. \ref{Fig:SocialNormalized_N}. In Fig. \ref{Fig:Social_N}, the social welfare of these schemes is plotted as a function of $N$. We can observe from Fig. \ref{Fig:Social_N} that the utility functions achieved by all three schemes of upper bound, contract, Stackelberg with complete information increase with $N$. This is because the overall social welfare increases with the number of EAPs in the market. The more EAPs in the market, the larger the summation of utility functions of all the DAP and EAPs. However, the Stackelberg game under asymmetric information decrease slightly as $N$ increases. This is because the Stackelberg game under asymmetric information fails to adapt to the change of the combinations of the EAPs' types. As we mentioned before, it can only calculate one single price for all the combinations of EAPs' types. The more EAPs in the market, the more diverse combinations of the EAPs' types. As such the performance of the Stackelberg under asymmetric information become worse. As a comparison, the Stackelberg game with complete information can calculate a price targeting a certain combination of EAPs' types in the market. So it provides better performance than that of its asymmetric counterpart. While the contract leverages its monopoly status in the market structure to provide a limited group of contract items for the EAPs to choose from. Therefore, contract theory-based scheme provides the better performance than that of both Stackelberg games and close to the performance of the upper bound.

In Fig. \ref{Fig:SocialNormalized_N}, the normalized social welfare of these schemes is plotted as a function of $N$, where social welfare of the contract and Stackelberg games are normalized by the upper bound. It can be seen in Fig. \ref{Fig:SocialNormalized_N} that when $N=2$, the social welfare of contract can initially gain more than $95\%$ of that of the centralized optimization scheme with complete information, and gradually approach to it with the increasing of $N$. This proves that the effects of information asymmetry can be mitigated successfully by leveraging contract theory. While the Stackelberg game with complete information can only provide the normalized social welfare of less than $75\%$. Besides, the performance of the Stackelberg game with asymmetric information is even worse, which is generally less than $50\%$ of that of the optimal scheme with complete information and decrease significantly with the increasing of $N$ in the market. This is because the more EAPs in the market, the more diverse the combinations of EAPs' types will be. The Stackelberg game under asymmetric information cannot adapt to the change of the combinations of EAPs' types as it can only calculate a single price for all possible combinations of EAPs' types.

\section{Conclusions and Future Work}
In this paper, we developed incentive mechanisms under complete and asymmetric information to unveil the impact of information asymmetry and market structure. Specifically, we developed a contract based incentive mechanisms for the wireless energy trading in radio frequency energy harvesting (RFEH) based Internet of Things (IoT) systems under asymmetric information. In the asymmetric information scenario, a Stackelberg game based scheme is also formulated as a comparison. In complete information, the existing Stackelberg game is extended from unified pricing into discriminative pricing as a comparison. In the simulations, it was shown that the Stackelberg game degrades significantly without complete information, and the performance of the contract scheme under asymmetric information is better than that of the Stackelberg scheme with complete information. It can be concluded that the performance of the considered system depends largely on the market structure (i.e., whether the EAPs are allowed to optimize their received power at the IoT devices with full freedom or not) than on the information scenarios (i.e., the complete or asymmetric information).

In our future work, we could consider both information asymmetry as well as hidden action. In this scenario, the DAP is not aware of the private information of EAPs and it cannot distinguish the actions taken by different EAPs, i.e., the received power contributed by different EAPs. Because the actions of EAPs are hidden from the DAP, some EAPs may get the reward of the group without paying any efforts, which leads to the free-rider problem. In this case, another mathematical tool from the economics, known as the moral hazard in teams, has a good potential to design effective incentive mechanisms for this new scenario.

\begin{appendix}
\subsection{Proof of Proposition \ref{proposition:KKT_solution_insight}} \label{appendix:Prop_2}
In this part, we will prove the proposition \ref{proposition:KKT_solution_insight}. Because the problem (P4.3) is a convex optimization problem, KKT conditions are the sufficient and necessary conditions for the optimal solution. The KKT conditions of problem (P4.3) are given as follows.

The first-order necessary condition are given by
\begin{equation}\label{eq:KKT1}
\left\{
\begin{aligned}
\sum\limits_{n_1,\dots,n_K} \Phi_{n_1,\dots,n_K} &\left[\frac{-W\log_2(e) \gamma n_1 \theta_1}{2+\gamma \sum\limits_{k=1}^{K}n_k \theta_k \lambda_k} + n_1 \theta_1 \lambda_1 \right] \\
&- \mu_1 = 0\\
\sum\limits_{n_1,\dots,n_K} \Phi_{n_1,\dots,n_K} &\left[\frac{-W\log_2(e) \gamma n_2 \theta_2}{2+\gamma \sum\limits_{k=1}^{K}n_k \theta_k \lambda_k} + n_2 \theta_2 \lambda_2 \right] \\
&- \mu_2 = 0\\
&\qquad\qquad\qquad\qquad\qquad\qquad \vdots\\
\sum\limits_{n_1,\dots,n_K} \Phi_{n_1,\dots,n_K} &\left[\frac{-W\log_2(e) \gamma n_N \theta_N}{2+\gamma \sum\limits_{k=1}^{K}n_k \theta_k \lambda_k} + n_N \theta_N \lambda_N \right] \\
&- \mu_N = 0
\end{aligned}
\right.
\end{equation}
where $\theta_k \ge 0$ are the types of EAPs, $\mu_k \ge 0$ are KKT multipliers, $\lambda_m \ge 0$ are the prices, and $k=1,2,\dots,K$.
The complementary slackness condition is given by
\begin{equation}\label{eq:KKT2}
\mu_1 \lambda_1 + \mu_2 \lambda_2 + \dots + \mu_K \lambda_K = 0.
\end{equation}
Since $\mu_k \ge 0$ and $\lambda_k \ge 0$, $k=1,2,\dots,K$ hold, (\ref{eq:KKT2}) becomes
\begin{equation}\label{eq:KKT3}
\left\{
\begin{aligned}
&\mu_1 \lambda_1  = 0\\
&\mu_2 \lambda_2 = 0\\
&\qquad \vdots\\
&\mu_K \lambda_K = 0
\end{aligned}
\right.
\end{equation}
To get the optimal solution of the KKT conditions, we need to solve the equation system consists of (\ref{eq:KKT1}) and (\ref{eq:KKT3}) in terms of $\mu_k \ge 0$ and $\lambda_k \ge 0$, $k=1,2,\dots,N$, which is a system of quadratic equations.
Now we will discuss the combinations of active or inactive constraints in KKT condition. Let first test if $\mu_1=\mu_2=\dots=\mu_K=0$ leads to a valid solution. By substitute $\mu_1,\mu_2,\dots,\mu_K$ with $\mu_1=\mu_2=\dots=\mu_K=0$ in (\ref{eq:KKT1}), we have
\begin{equation}\label{eq:KKT4}
\left\{
\begin{aligned}
&\lambda_1 \Delta = W\log_2(e) \gamma \Omega(\pmb \lambda)\\
&\lambda_2 \Delta = W\log_2(e) \gamma \Omega(\pmb \lambda)\\
&\qquad\qquad\qquad \vdots\\
&\lambda_K \Delta = W\log_2(e) \gamma \Omega(\pmb \lambda)\\
\end{aligned}
\right.
\end{equation}
where $\Delta$ is given by
\begin{equation}\label{eq:Delta}
\Delta = \sum\limits_{n_1,\dots,n_K} \Phi_{n_1,\dots,n_K} = 1,
\end{equation}
and $\Omega(\pmb \lambda)$, $\pmb \lambda = \left[ \lambda_1,\lambda_2,\dots,\lambda_K \right]^T$  is given by
\begin{equation}\label{eq:Omega}
\Omega(\pmb \lambda) = \sum\limits_{n_1,\dots,n_K} \frac{\Phi_{n_1,\dots,n_K}}{2+\gamma \sum\limits_{k=1}^{K}n_k \theta_k \lambda_k}.
\end{equation}

The above system of equations in (\ref{eq:KKT4}) can be solved numerically. Note that the right term of each equation in (\ref{eq:KKT4}) are the same, so we can conclude that
\begin{equation}\label{eq:KKT5}
\lambda_1 = \lambda_2 =\dots = \lambda_K = W\log_2(e) \gamma \Omega(\pmb \lambda).\\
\end{equation}

Because problem (P4.3) is a convex optimization problem, we can conclude that this solution given by KKT conditions is the solution of original optimization problem.

\subsection{Proof of Lemma \ref{lemma:pi2q}}\label{appendix:lemma_pi2q}
The proof is conducted in two parts. First, we prove if $q_i > q_j$, then $\pi_i > \pi_j$. Due to the IC constraints in (\ref{eq:CntFrm}), we have
\begin{equation}
\pi_i - \frac{q_i^2}{\theta_i} \ge \pi_j - \frac{q_j^2}{\theta_i},
\end{equation}
and equivalently,
\begin{equation}
\theta_i (\pi_i - \pi_j) \ge q_i^2 - q_j^2 = (q_i + q_j)(q_i - q_j).
\end{equation}

Since $q_i > q_j$, we have
\begin{equation}
\theta_i (\pi_i - \pi_j) \ge q_i^2 - q_j^2 = (q_i + q_j)(q_i - q_j) >0,
\end{equation}
and thus $\pi_i > \pi_j$.

Next we prove if $\pi_i > \pi_j$, then $q_i > q_j$. Due to the IC constraints in (\ref{eq:CntFrm}), we have
\begin{equation}
\pi_j - \frac{q_j^2}{\theta_j} \ge \pi_i - \frac{q_i^2}{\theta_j},
\end{equation}
and equivalently,
\begin{equation}
(q_i + q_j)(q_i - q_j) = q_i^2 - q_j^2 \ge \theta_i (\pi_i - \pi_j).
\end{equation}

Since $\pi_i >\pi_j$, then we have
\begin{equation}
(q_i + q_j)(q_i - q_j) = q_i^2 - q_j^2 \ge \theta_i (\pi_i - \pi_j) > 0,
\end{equation}
and thus $q_i > q_j$. This completes the proof.

\subsection{Proof of Lemma \ref{lemma:theta2pi}}\label{appendix:theta2pi}
We prove this by contradiction. Suppose that there exists $\pi_i < \pi_j$ when $\theta_i > \theta_j$. We have
\begin{equation}\label{eq:Lemma3-1}
(\pi_i - \pi_j)(\theta_i - \theta_j) < 0.
\end{equation}
Due to the IC constraints, we also have
\begin{equation}\label{eq:Lemma3-2}
\pi_i - \frac{q_i^2}{\theta_i} \ge \pi_j - \frac{q_j^2}{\theta_i},
\end{equation}
and
\begin{equation}\label{eq:Lemma3-3}
\pi_j - \frac{q_j^2}{\theta_j} \ge \pi_i - \frac{q_i^2}{\theta_j}.
\end{equation}
Combine (\ref{eq:Lemma3-2}) and (\ref{eq:Lemma3-3}), we have
\begin{equation}\label{eq:Lemma3-4}
(\pi_i - \pi_j)(\theta_i - \theta_j) \ge 0,
\end{equation}
which is in contradiction with (\ref{eq:Lemma3-1}). So if $\theta_i > \theta_j$, then $\pi_i > \pi_j$.

\subsection{Proof of Lemma \ref{lemma:ReduceIR}}\label{appendix:lemma:ReduceIR}
Due to the IC condition, $\forall k \in \{2,\dots,K\}$, we have
\begin{equation}\label{eq:ReIRP-1}
\pi_k - \frac{q_k^2}{\theta_k} \ge \pi_1 - \frac{q_1^2}{\theta_k}.
\end{equation}
Since we have defined that $\theta_1 < \theta_2 < \dots < \theta_K$, we also have
\begin{equation}\label{eq:ReIRP-2}
\pi_1 - \frac{q_1^2}{\theta_k} \ge \pi_1 - \frac{q_1^2}{\theta_1}.
\end{equation}
Combine (\ref{eq:ReIRP-1}) and (\ref{eq:ReIRP-2}), we have
\begin{equation}\label{eq:ReIRP-3}
\pi_k - \frac{q_k^2}{\theta_k} \ge \pi_1 - \frac{q_1^2}{\theta_1} \ge 0.
\end{equation}

Note that (\ref{eq:ReIRP-3}) shows that with the IC condition, if the IR condition of the EAP with type $\theta_1$ holds, the IR condition of the other $K-1$ types will also hold. So the other $K-1$ IR conditions can be bind into the IR condition of the EAP with type $\theta_1$.

\subsection{Proof of Lemma \ref{lemma:ReduceIC}}\label{appendix:lemma:ReduceIC}
There are $K(K-1)$ IC constraints in (\ref{eq:CntFrm}), which can be divided into $K(K-1)/2$ downward incentive compatibility (DIC)\footnote{Note that $K(K-1)/2$ is still an integer. Because $K(K-1)$ is the multiplication of two continuous integers, which must be an even number. So it is divisible by two.}, given by
\begin{equation}\label{eq:DIC}
\pi_i - \frac{q_i^2}{\theta_i} \ge \pi_{j} - \frac{q_{j}^2}{\theta_{i}}, \forall i,j \in \{2,\dots,K\}, i>j,
\end{equation}
and $K(K-1)/2$ upward incentive compatibility (UIC), given by
\begin{equation}\label{eq:UIC}
\pi_i - \frac{q_i^2}{\theta_i} \ge \pi_{j} - \frac{q_{j}^2}{\theta_{i}}, \forall i,j \in \{2,\dots,K\}, i<j,
\end{equation}

Let's first prove the DIC can be reduced as the LDIC. By using the LDIC for three continuous types, $\theta_{i-1} < \theta_{i} < \theta_{i+1}, \forall i \in \{2,\dots,K-1\}$, we have
\begin{equation}\label{eq:LDICP-1}
\pi_{i+1} - \frac{q_{i+1}^2}{\theta_{i+1}} \ge \pi_{i} - \frac{q_{i}^2}{\theta_{i+1}},
\end{equation}
\begin{equation}\label{eq:LDICP-2}
\pi_i - \frac{q_i^2}{\theta_i} \ge \pi_{i-1} - \frac{q_{i-1}^2}{\theta_{i}}, \forall i.
\end{equation}
By applying the monotonicity, i.e., if $\theta_i > \theta_j$, then $\pi_i > \pi_j$, $\forall i,j \in \{1,\dots,K\}$, we have
\begin{equation}\label{eq:LDICP-3}
(\theta_{i+1} - \theta_{i})(\pi_{i} - \pi_{i-1}) \ge 0,
\end{equation}
\begin{equation}\label{eq:LDICP-4}
\theta_{i+1}(\pi_{i} - \pi_{i-1}) \ge \theta_{i}(\pi_{i} - \pi_{i-1}),
\end{equation}
Combine (\ref{eq:LDICP-2}) and (\ref{eq:LDICP-4}), we have
\begin{equation}\label{eq:LDICP-5}
\theta_{i+1}(\pi_{i} - \pi_{i-1}) \ge \theta_{i}(\pi_{i} - \pi_{i-1}) \ge q_i^2 - q_{i-1}^2.
\end{equation}
Equally, (\ref{eq:LDICP-5}) becomes
\begin{equation}\label{eq:LDICP-6}
\pi_{i} - \frac{q_i^2}{\theta_{i+1}} \ge \pi_{i-1} - \frac{q_{i-1}^2}{\theta_{i+1}}.
\end{equation}
Combine (\ref{eq:LDICP-6}) and (\ref{eq:LDICP-1}), we have
\begin{equation}\label{eq:LDICP-7}
\pi_{i+1} - \frac{q_{i+1}^2}{\theta_{i+1}} \ge \pi_{i-1} - \frac{q_{i-1}^2}{\theta_{i+1}}.
\end{equation}
So far, we have proved that type $\theta_{i+1}$ will prefer contract item $(q_{i+1},\pi_{i+1})$ rather than contract item $(q_{i-1},\pi_{i-1})$. By using (\ref{eq:LDICP-7}), it can be extended downward until type $\theta_{1}$, and thus all DIC holds.
\begin{equation}\label{eq:LDICP-8}
\pi_{i+1} - \frac{q_{i+1}^2}{\theta_{i+1}} \ge \pi_{i-1} - \frac{q_{i-1}^2}{\theta_{i+1}} \ge \dots\ \ge \pi_{1} - \frac{q_{1}^2}{\theta_{1}}, \forall i.
\end{equation}
So we conclude that with the monotonicity and the LDIC, the DIC holds. Similarly, we can prove that with the monotonicity and the LUIC, the UIC holds.

\subsection{Proof of Lemma \ref{lemma:final}}\label{appendix:lemma:final}
We will first prove that the LDIC can be simplified as $\pi_k - q_k^2 / \theta_k = \pi_{k-1} - q_{k-1}^2/\theta_k$, which together with monotonicity can ensure the LUIC hold.

For the reduced IR constraint $\pi_1 - {q_1^2}/{\theta_1} \ge 0$ in (\ref{eq:CntFrmRe}), the DAP will lower the $\pi_1$ as possible as it can to improve the optimization objective function $\mathbb{E}\{U_{DAP}\}$, until $\pi_1 - {q_1^2}/{\theta_1} = 0$.
As for the LDIC, which is $\pi_i - {q_i^2}/{\theta_i} \ge \pi_{i-1} - {q_{i-1}^2}/{\theta_i}, \forall i \in \{2,\dots,K\}$. Notice that the LDIC will still hold if both $\pi_i$ and $\pi_{i-1}$ are lowered by the same amount. To maximize the optimization objective function, the DAP will lower all $\pi-j$ as much as possible until $\pi_i - {q_i^2}/{\theta_i} = \pi_{i-1} - {q_{i-1}^2}/{\theta_i}$. Note that this process will not affect on other type's LDIC. So the LDIC can be reduced to $\pi_i - {q_i^2}/{\theta_i} = \pi_{i-1} - {q_{i-1}^2}/{\theta_i}, \forall k \in \{2,\dots,K\}$.

Next, we show that if $\pi_i - {q_i^2}/{\theta_i} = \pi_{i-1} - {q_{i-1}^2}/{\theta_i}, \forall k \in \{2,\dots,K\}$ and the monotonicity hold, the LUIC holds. Since we have $\pi_i - {q_i^2}/{\theta_i} = \pi_{i-1} - {q_{i-1}^2}/{\theta_i}, \forall k \in \{2,\dots,K\}$, and equally it becomes
\begin{equation}\label{eq:EQP-1}
\theta_{i}(\pi_{i}-\pi_{i-1}) = q_i^2 - q_{i-1}^2.
\end{equation}
Because of monotonicity, i.e., if $\theta_i \ge \theta_{i-1}$, then $\pi_i \ge \pi_{i-1}$, we further have
\begin{equation}\label{eq:EQP-2}
\theta_{i}(\pi_{i}-\pi_{i-1}) \ge \theta_{i-1}(\pi_{i}-\pi_{i-1}).
\end{equation}
Combine (\ref{eq:EQP-1}) and (\ref{eq:EQP-2}), we have
\begin{equation}\label{eq:EQP-3}
\theta_{i}(\pi_{i}-\pi_{i-1}) = q_i^2 - q_{i-1}^2 \ge \theta_{i-1}(\pi_{i}-\pi_{i-1}),
\end{equation}
and equally we have
\begin{equation}\label{eq:EQP-4}
\theta_{i-1}\pi_{i} - q_i^2 \le \theta_{i-1}\pi_{i-1} - q_{i-1}^2,
\end{equation}
\begin{equation}\label{eq:EQP-5}
\pi_{i} - \frac{q_i^2}{\theta_{i-1}} \le \pi_{i-1} - \frac{q_{i-1}^2}{\theta_{i-1}},
\end{equation}
which is exactly the LUIC condition. So the LUIC can be removed from the constraints in (\ref{eq:CntFrmRe}).

\subsection{Proof of Proposition \ref{proposition:KKT_solution}}\label{appendix:prop_1}
In this part, we will prove the proposition \ref{proposition:KKT_solution}. Since the problem (P5.3) is a convex optimization problem, KKT conditions will be the sufficient and necessary conditions for the optimal solution. To solve problem (P5.3), the KKT conditions are given as follows.

The first-order necessary condition are given by
\begin{equation}\label{eq:KKT_FO}
\left\{
\begin{aligned}
&\frac{-W\log_2(e) \gamma \theta_1}{2+\gamma \sum\limits_{m=1}^{N}\theta_m \lambda_m} + \theta_1 \lambda_1 - \mu_1 = 0\\
&\frac{-W\log_2(e) \gamma \theta_2}{2+\gamma \sum\limits_{m=1}^{N}\theta_m \lambda_m} + \theta_2 \lambda_2 - \mu_2 = 0\\
&\qquad\qquad\qquad \vdots\\
&\frac{-W\log_2(e) \gamma \theta_N}{2+\gamma \sum\limits_{m=1}^{N}\theta_m \lambda_m} + \theta_N \lambda_N - \mu_N = 0
\end{aligned}
\right.
\end{equation}
where $\theta_m = \frac{G_{m,s}^2}{a_m}$, $\mu_m \ge 0$ are KKT multipliers, and $\lambda_m \ge 0,m=1,2,\dots,N$.

The complementary slackness condition is given by
\begin{equation}\label{eq:KKT_CS}
\mu_1 \lambda_1 + \mu_2 \lambda_2 + \dots + \mu_N \lambda_N = 0.
\end{equation}
Since $\mu_m \ge 0$ and $\lambda_m \ge 0,m=1,2,\dots,N$ hold, (\ref{eq:KKT_CS}) becomes
\begin{equation}\label{eq:KKT_CS2}
\left\{
\begin{aligned}
&\mu_1 \lambda_1  = 0\\
&\mu_2 \lambda_2 = 0\\
&\qquad \vdots\\
&\mu_N \lambda_N = 0
\end{aligned}
\right.
\end{equation}

To get the optimal solution of the KKT conditions, we need to solve the equation system consists of (\ref{eq:KKT_FO}) and (\ref{eq:KKT_CS2}) in terms of $\mu_m \ge 0$ and $\lambda_m \ge 0,m=1,2,\dots,N$, which is a system of quadratic equations.

Now we will discuss the combinations of active or inactive constraints in KKT condition. Let first test if $\mu_1=\mu_2=\dots=\mu_N=0$ leads to a valid solution. By substitute $\mu_1,\mu_2,\dots,\mu_N$ with $\mu_1=\mu_2=\dots=\mu_N=0$ in (\ref{eq:KKT_FO}), we have
\begin{equation}\label{eq:KKT_FO2}
\left\{
\begin{aligned}
&\lambda_1 \left( 2+\gamma \sum\limits_{m=1}^{N}\theta_m \lambda_m \right) = W\log_2(e) \gamma\\
&\lambda_2 \left( 2+\gamma \sum\limits_{m=1}^{N}\theta_m \lambda_m \right) = W\log_2(e) \gamma\\
&\qquad\qquad\qquad \vdots\\
&\lambda_N \left( 2+\gamma \sum\limits_{m=1}^{N}\theta_m \lambda_m \right) = W\log_2(e) \gamma\\
\end{aligned}
\right.
\end{equation}
By solving the system of equations of (\ref{eq:KKT_FO2}), we can get a solution as given by
\begin{equation}\label{eq:KKT_sol2}
\lambda_1 = \lambda_2 = \dots = \lambda_N=\frac{\sqrt{\log_2(e)\gamma^2 W \Theta+1}-1}{\gamma \Theta},
\end{equation}
where $e$ is the base of the natural logarithm, $\Theta$ is given by
\begin{equation}\label{eq:Theta}
\Theta = \sum\limits_{m=1}^{N} \theta_m.
\end{equation}

Because problem (P5.3) is a convex optimization problem, we can conclude that this solution given by KKT conditions is the solution of original optimization problem.

\end{appendix}



\section*{Acknowledgment}
The authors would like to thank Prof. Zhu Han for his helpful discussion. The authors also thank the editor and anonymous reviewers for their valuable comments and suggestions, which improved the quality of the paper.

\ifCLASSOPTIONcaptionsoff
  \newpage
\fi

\bibliographystyle{IEEEtran}
\bibliography{EH_Contract_Theory}



\end{document}